\definecolor{darkgreen}{rgb}{0.0,0,0.9}
\renewcommand{\P}[1]{{\mathbb{P}}\left[#1\right]}
\newcommand{\PP}[2]{{\mathbb{P}}_{#1}\left[#2\right]}
\newcommand{\E}{{\mathbb{E}}}
\newcommand{\iE}[1]{{\mathbb E}_{#1}}
\newcommand{\uE}[1]{\underset{#1}{\E}}
\newcommand{\st}{\mbox{\rm s.t. }}
\providecommand{\norm}[1]{\left\lVert#1\right\rVert}
\def\equationautorefname~#1\null{%
  equation~(#1)\null
}
\declaretheorem[numberwithin=section]{theorem}
\declaretheorem[sibling=theorem]{lemma}
\declaretheorem[sibling=theorem]{proposition}
\declaretheorem[sibling=theorem]{corollary}
\declaretheorem[sibling=theorem]{definition}
\declaretheorem[sibling=theorem]{problem}
\newenvironment{proofof}[1]{{\medbreak\noindent \em Proof of #1.  }}{\hfill\qed\medbreak}
\def\bone{{\bf 1}}
\def\eps{{\epsilon}}
\def\R{\mathbb{R}}
\def\C{\mathbb{C}}
\def\cF{{\cal F}}
\def\X{\chi}
\def\l{\ell}
\def\Ab{\textup{\bf Ab}}
\DeclareMathOperator{\polylog}{polylog}
\DeclareMathOperator{\trace}{Tr}
\DeclareMathOperator{\image}{Im}
\DeclareMathOperator{\conv}{conv}
\DeclareMathOperator{\interior}{relint}
\begin{document}

\title{The Kadison-Singer Problem for Strongly Rayleigh Measures \\ 
and  Applications to Asymmetric TSP}
\author{Nima Anari
\thanks{Computer Science Division, UC Berkeley.
Email: \protect\url{anari@berkeley.edu}.}
\and
Shayan Oveis Gharan
\thanks{Department of Computer Science and Engineering, University of Washington. This work was partly done while the author was a postdoctoral Miller fellow at UC Berkeley. Email: \protect\url{shayan@cs.washington.edu}.}
}

\date{}
\maketitle

\begin{abstract}

Marcus, Spielman, and Srivastava in their seminal work \cite{MSS13} resolved the Kadison-Singer conjecture by proving that for any set of finitely supported independently distributed random vectors $v_1,\dots, v_n$ which have ``small'' expected squared norm and are in isotropic position (in expectation), there is a positive probability that the sum $\sum v_i v_i^\intercal$ has small spectral norm. Their proof crucially employs real stability of polynomials which is the natural generalization of real-rootedness to multivariate polynomials.

Strongly Rayleigh distributions are families of probability distributions whose generating polynomials are real stable \cite{BBL09}. As independent distributions are just special cases of strongly Rayleigh measures, it is a natural question to see if the main theorem of \cite{MSS13} can be extended to families of vectors assigned to the elements of a strongly Rayleigh distribution.

In this paper we answer this question affirmatively; we show that for any homogeneous strongly Rayleigh distribution where the marginal probabilities are upper bounded by $\eps_1$  and any isotropic set of vectors assigned to the underlying elements whose norms are at most $\sqrt{\eps_2}$, there is a set in the support of the distribution such that  the spectral norm 
of the sum of the natural quadratic forms of the vectors assigned to the elements of the set is at most $O(\eps_1+\eps_2)$.
We employ our theorem to provide a sufficient condition for the existence of spectrally thin trees. This, together with a recent work of the authors \cite{AO14}, provides an improved upper bound on the integrality gap of the natural LP relaxation of the Asymmetric Traveling Salesman Problem. 
\end{abstract}

\section{Introduction}
Marcus, Spielman and Srivastava \cite{MSS13}
in a breakthrough work proved the Kadison-Singer conjecture \cite{KS59} by proving Weaver's \cite{Wea04} conjecture $\text{KS}_2$ and the Akemann and Anderson's Paving conjecture \cite{AA91}.
The following is their main technical contribution.
\begin{theorem}
\label{thm:mss}
If $\eps>0$ and $v_1,\ldots,v_m$ are independent random vectors in $\R^d$ with finite support where,
$$ \sum_{i=1}^m \E v_iv_i^\intercal = I,$$
such that for all $i$,
$$ \E\norm{v_i}^2 \leq \eps,$$
then
$$ \P{\norm{\sum_{i=1}^m v_iv_i^\intercal} \leq (1+\sqrt{\eps})^2} >0.$$
\end{theorem}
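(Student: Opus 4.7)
The plan is to follow the method of interlacing families of polynomials, applied to the expected characteristic polynomial of $\sum_i v_i v_i^\intercal$. Concretely, first I would reduce to the case where each $v_i$ is supported on finitely many points; then for each outcome $(w_1, \ldots, w_m)$ in the joint support I consider the univariate polynomial $q_w(x) = \det\!\bigl(xI - \sum_i w_i w_i^\intercal\bigr)$. The goal is to argue (i) that the weighted collection $\{q_w\}$ forms an interlacing family, so that some realization has $\lambda_{\max}\!\bigl(\sum w_i w_i^\intercal\bigr) \le \lambda_{\max}(\mu)$, where $\mu(x) = \E[q_v(x)]$; and (ii) that $\lambda_{\max}(\mu) \le (1+\sqrt{\eps})^2$.

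For step (i) the key identity, obtained by exploiting independence together with the rank-one matrix determinant identity $\det(M - uu^\intercal) = (1-\partial_t)\det(M + tuu^\intercal)|_{t=0}$, is
\[
\mu(x) \;=\; \left(\prod_{i=1}^m (1 - \partial_{z_i})\right) \det\!\Bigl(xI + \sum_{i=1}^m z_i B_i\Bigr)\Big|_{z=0},
\]
where $B_i = \E[v_i v_i^\intercal]$, so that $\sum_i B_i = I$ and $\trace(B_i) \le \eps$. The polynomial $\det(xI + \sum_i z_i B_i)$ is real stable because it is the determinant of a linear pencil in positive semidefinite matrices, and each $(1-\partial_{z_i})$ preserves real stability; hence $\mu$ is real rooted. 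The same argument applied to convex combinations over any ``partial choice'' of the $v_i$'s shows that the $q_w$'s interlace in the tree-structured way required by the interlacing family machinery, delivering the existence of a realization with $\lambda_{\max}(\sum w_i w_i^\intercal) \le \lambda_{\max}(\mu)$.

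For step (ii) I would use the multivariate barrier method. Define the upper barrier $\Phi_i^p(z) = \partial_{z_i} \log p(z) / p(z)$ evaluated above the roots of a real stable polynomial $p$. At the starting point $p_0(x, z_1, \ldots, z_m) = \det(xI + \sum_i z_i B_i)$, the isotropy condition $\sum B_i = I$ implies that the point $(x, z_1, \ldots, z_m) = (t, -t, -t, \ldots, -t)$ is above all roots for $t$ sufficiently large, and the norm bound $\trace(B_i)\le\eps$ gives good control on $\Phi_i$ there. The barrier lemma then says that each application of $(1-\partial_{z_i})$ allows us to shift the point $z_i \mapsto z_i + \delta$ while losing at most a controlled amount in each $\Phi_j$ and in $\Phi_x$, provided $\delta$ exceeds $1/(1-\Phi_i)$. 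Optimizing $\delta$ against the total slack afforded by $x=t$ yields $\lambda_{\max}(\mu) \le (1+\sqrt{\eps})^2$.

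The main obstacle, in my view, is the barrier analysis: proving the quantitative one-step lemma that bounds the change in $\Phi_i$ under $(1-\partial_{z_i})$ and then setting up the correct invariant so that after $m$ applications the final barrier value translates into the sharp $(1+\sqrt{\eps})^2$ bound rather than a lossier $O(1+\eps)$ bound. The interlacing/real-stability part, by contrast, is essentially a structural consequence of the closure properties of real stable polynomials, and the reduction to rank-one atomic distributions is routine.
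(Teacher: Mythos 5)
The paper does not prove \autoref{thm:mss} at all: it is quoted verbatim from \cite{MSS13} as the prior result that this paper generalizes, and the body of the paper never revisits it. Your outline is, in essence, the actual \cite{MSS13} proof (expected characteristic polynomial as $\prod_i(1-\partial_{z_i})\det(xI+\sum_i z_iB_i)$ with $B_i=\E[v_iv_i^\intercal]$, real stability of the determinantal pencil, interlacing-family reduction, then the multivariate barrier argument), so it is the right proof of the statement even though it is not ``the paper's'' proof. It also happens to be the template the paper adapts for its own \autoref{thm:main}, where the key modifications are replacing the product measure by a strongly Rayleigh one and replacing $1-\partial_{z_i}$ by $1-\partial^2_{z_i}$ in the barrier analysis to get a bound near zero rather than near one.

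One small imprecision worth flagging: your proposed starting point $(x,z_1,\dots,z_m)=(t,-t,\dots,-t)$ for the barrier is only on the boundary of the above-roots region (at that point $xI+\sum z_iB_i=tI-tI=0$), not strictly above it, because a nonnegative perturbation purely in a $z_i$ direction gives the singular matrix $u_iB_i$. \cite{MSS13} avoids this by working with $\det(\sum_i z_iA_i)$ and starting at $z=t\bone$, or equivalently by folding $x$ into the $z_i$ before running the barrier; either fix makes the initial point genuinely above all roots and lets the shift-by-$\delta$ induction go through cleanly. The rest of your sketch, including the condition $\delta\ge 1/(1-\Phi^i)$ for a single $(1-\partial_{z_i})$ step, matches Lemma 5.10 of \cite{MSS13}.
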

In this paper, we prove an extension of the above theorem to families of vectors assigned to elements of a not necessarily independent distribution.

Let $\mu:2^{[m]}\to R_+$ be a probability distribution on the subsets of the set $[m]=\{1,2,\ldots,m\}$. In particular, we assume that $\mu(.)$ is nonnegative and,
$$ \sum_{S\subseteq [m]} \mu(S) = 1.$$
 We assign a multi-affine polynomial with variables $z_1,\ldots,z_m$ to $\mu$,
 $$ g_\mu(z) = \sum_{S\subseteq [m]} \mu(S)\cdot z^S,$$
 where for a set $S\subseteq[m]$, $z^S=\prod_{i\in S} z_i$.
The polynomial $g_\mu$ is also  known as the {\em generating polynomial} of $\mu$. We say $\mu$ is a {\em homogeneous} probability distribution if $g_\mu$ is a homogeneous polynomial.

 We say that $\mu$ is a {\em strongly Rayleigh} distribution if $g_\mu$ is a real stable polynomial.
See \autoref{sec:realstability} for the definition of real stability.
Strongly Rayleigh measures are introduced and deeply studied in the seminal work of Borcea, Br\"and\'en and Liggett \cite{BBL09}.
They are natural generalizations of product distributions and cover several interesting families of probability distributions including determinantal measures and random spanning tree distributions.  
We refer interested readers to \cite{OSS11,PP14}
for applications of these probability measures.

Our main theorem extends \autoref{thm:mss} to
families of vectors assigned to the elements of a strongly Rayleigh distribution. This can be seen as a generalization because independent distributions are special classes of strongly Rayleigh measures. 
To state the main theorem we need another definition. 
The {\em marginal} probability of an element $i$ with respect to a probability distribution, $\mu$, is the probability that $i$ is in a sample of $\mu$,
\begin{equation}
\label{eq:marginaldef}
 \PP{S\sim\mu}{i\in S} = \partial_{z_i} g_\mu(z) \big|_{z_1=\ldots=z_m=1}. 
\end{equation}

\begin{theorem}[Main]
\label{thm:main}
Let $\mu$ be a homogeneous strongly Rayleigh probability distributions on $[m]$ such that the marginal probability of each element is at most $\eps_1$, and let $v_1,\ldots,v_m\in\R^d$ be vectors in an isotropic position,
$$ \sum_{i=1}^m v_iv_i^\intercal =I,$$
such that for all $i$, $\norm{v_i}^2 \leq \eps_2$.
Then, 
$$ \PP{S\sim \mu}{\norm{\sum_{i\in S} v_iv_i^\intercal} \leq 4(\eps_1+\eps_2)+2(\eps_1+\eps_2)^2 } >0.$$
\end{theorem}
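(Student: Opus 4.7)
The plan is to follow the paradigm of \cite{MSS13}: introduce a ``mixed characteristic polynomial'' $Q(x)$ whose largest root upper-bounds $\lambda_{\max}(\sum_{i\in S} v_iv_i^\intercal)$ for some $S$ in the support of $\mu$, show that $Q$ is real-rooted, and bound its largest root via a multivariate barrier argument. Concretely, set
$$ Q(x) := \EE{S\sim\mu}{\det\!\bigl(xI - {\textstyle\sum_{i\in S}} v_iv_i^\intercal\bigr)}. $$
Because each $v_iv_i^\intercal$ has rank one, the polynomial $P(x,z):=\det(xI-\sum_i z_i v_iv_i^\intercal)$ is multiaffine in $z$, and direct expansion shows that $Q(x)=H(x,\bone)$, where $H(x,z)$ is the multiaffine Hadamard product (in $z$) of $P(x,-z)=\det(xI+\sum_i z_i v_iv_i^\intercal)$ with the translate $g_\mu(z+\bone)$. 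The first factor is real-stable because the $v_iv_i^\intercal$ are PSD (a standard determinantal-stability fact) and the second is real-stable because $g_\mu$ is (strongly Rayleigh) and translating by a real vector preserves real-stability. Invoking the Borcea--Br\"and\'en closure results for multiaffine real-stable polynomials under the Hadamard product, followed by specialization at the real point $z=\bone$, proves $Q(x)$ is real-rooted.

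To extract a concrete $S$ in the support of $\mu$ with $\lambda_{\max}(\sum_{i\in S}v_iv_i^\intercal)\leq\lambda_{\max}(Q)$, I would build an interlacing family indexed by partial assignments of the indicators $\mathbf{1}[i\in S]\in\{0,1\}$, attaching to each node the expected characteristic polynomial of the corresponding conditional distribution. Conditioning a strongly Rayleigh measure on $\{i\in S\}$ or on $\{i\notin S\}$ preserves the strongly Rayleigh property (Borcea--Br\"and\'en) as well as homogeneity (up to a degree drop), so the real-rootedness argument of the previous paragraph applies at every node. The standard interlacing-families lemma then produces the desired $S$.

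The bulk of the work is bounding $\lambda_{\max}(Q)$, which I propose to do by adapting MSS's multivariate upper-barrier argument to $H(x,z)$, initializing at a starting point $x_0$ pinned down by the isotropy hypothesis $\sum_i v_iv_i^\intercal=I$. One tracks per-coordinate upper barriers $\Phi_i(x,z):=\partial_{z_i}\log H(x,z)$ through the stepwise construction of $H$: the norm bound $\|v_i\|^2\leq\eps_2$ controls the contribution of the rank-one determinantal update for coordinate $i$, while the marginal bound $\eps_1$ controls the contribution of the generating polynomial at coordinate $i$; summing the shifts over all $m$ coordinates and optimizing $x_0$ should yield $4(\eps_1+\eps_2)+2(\eps_1+\eps_2)^2$. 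The main obstacle is precisely this barrier bookkeeping in the non-product setting: unlike MSS, where the barrier evolves one independent Bernoulli at a time, here $g_\mu$ couples the coordinates, so one cannot apply a single-variable lemma $m$ times. The likely resolution is to peel off coordinates one at a time via conditioning (closed under strong Rayleigh-ness) and to establish a joint barrier-shift lemma that simultaneously accounts for the $\eps_2$-sized rank-one update and the $\eps_1$-sized marginal update---this joint accounting is what produces the symmetric dependence on $\eps_1+\eps_2$ in the claimed bound.
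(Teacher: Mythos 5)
Your high-level skeleton (expected characteristic polynomial, real-rootedness via real stability, interlacing family, multivariate barrier) matches the paper, but two of the load-bearing steps are off or missing.

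First, the real-rootedness argument. You define $Q(x)=\iE{S\sim\mu}\det(xI-\sum_{i\in S}v_iv_i^\intercal)$ and propose to prove it real-rooted by invoking a ``Borcea--Br\"and\'en closure of multiaffine real-stable polynomials under Hadamard products.'' No such general closure theorem exists: the Hadamard (coordinatewise coefficient) product of two multiaffine real stable polynomials need not be real stable (already in two variables, $z_1z_2-1$ is real stable but its Hadamard square $z_1z_2+1$ is not; restricting to nonnegative coefficients does not salvage the general claim). The paper deliberately avoids this trap. It shows that the expected characteristic polynomial (in the paper's normalization, $x^{d_\mu-d}\iE{S}\det(x^2I-2\sum_{i\in S}v_iv_i^\intercal)$) is obtained by applying $\prod_i(1-\partial^2_{z_i})$ to the \emph{ordinary product} $g_\mu(x\bone+z)\cdot\det(xI+\sum_iz_iv_iv_i^\intercal)$ and then setting $z=0$. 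The ordinary product of two real stable polynomials is real stable, the operator $(1-\partial^2_{z_i})=(1-\partial_{z_i})(1+\partial_{z_i})$ preserves real stability, and specialization at real $z$ preserves it too; there is no appeal to Hadamard-product closure anywhere. (As a side effect, this also explains why the paper's expected characteristic polynomial carries the factor $2$ and the $x^2$: they arise from $\partial^2_{z_i}$ hitting a product of two multiaffine factors.) If you want your un-rescaled $Q(x)$, you would need the operator $(1-\tfrac12\partial^2_{z_i})$ applied to $g_\mu(\bone+z)\cdot\det(xI+\sum_iz_iv_iv_i^\intercal)$, which also preserves stability; but you must make that argument, not cite a nonexistent Hadamard closure.

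Second, and more seriously, the barrier argument --- which is the technical heart of the theorem --- is not carried out. You correctly observe that the vanilla MSS barrier bookkeeping gives upper bounds $\geq 1$, which would be vacuous here since $\lVert\sum_{i\in S}v_iv_i^\intercal\rVert\leq 1$ always, but your proposed fix (``peel off coordinates via conditioning'' and ``a joint barrier-shift lemma'') is not a proof. The paper's key innovation is precisely the analysis of the $(1-\partial^2_{z_j})$ operator: \autoref{lem:barrierhigherderivative} establishes $\partial_{z_i}\Psi^j_p/\partial_{z_i}\Phi^j_p\leq 2\Phi^j_p$ (via the Lewis--Parrilo--Ramana determinantal representation), and \autoref{lem:barriershift} then shows each $(1-\partial^2_{z_j})$ step requires a shift of only $\delta$ satisfying $\tfrac{2}{\delta}\Phi^j_p+(\Phi^j_p)^2\leq 1$, i.e.\ $\delta$ of order $\Phi^j_p$ rather than order $1$. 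Starting the barrier at $t\bone$ with $t=\sqrt{2\eps+\eps^2}$, where $\Phi^i_p(t\bone)\leq(\eps_1+\eps_2)/t$ by the marginal and norm assumptions, gives total upper bound $t+\delta=2\sqrt{2\eps+\eps^2}$ on the roots of $\iE{S}\det(x^2I-2\sum v_iv_i^\intercal)$, hence $\lVert\sum_{i\in S}v_iv_i^\intercal\rVert\leq\tfrac12(2\sqrt{2\eps+\eps^2})^2=4\eps+2\eps^2$ for some $S$ in the support. None of this appears in your sketch; your worry that ``$g_\mu$ couples the coordinates, so one cannot apply a single-variable lemma $m$ times'' is also misplaced --- the coordinates are coupled even in MSS (through the determinant), and in both cases the barrier is still peeled off one coordinate at a time via the differential operators. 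Your interlacing-family step (conditioning on $i\in S$ or $i\notin S$, which preserves the strongly Rayleigh property) is correct and matches the paper.
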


The above theorem does not directly generalize \autoref{thm:mss}, but it can be seen as a variant of \autoref{thm:mss} to the case where the vectors $v_1,\dots,v_m$ are negatively dependent. 
We expect to see several applications of our main theorem that are not realizable by the original proof of \cite{MSS13}. 
In the following subsections we describe our main motivation for studying the above statement, which is to design approximation algorithms for the Asymmetric Traveling Salesman Problem (ATSP).

Let us conclude this part by proving a simple application of the above theorem to prove $\text{KS}_r$ for $r\geq 5$. 
\begin{corollary}\label{cor:KSr}
Given a set vectors $v_1,\dots,v_m\in\R^d$ 	in isotropic position,
$$ \sum_{i=1}^m v_iv_i^\intercal = I,$$
if for all $i$, $\norm{v_i}^2\leq \eps$, then for any $r$, there is an $r$ partitioning of $[m]$ into $S_1,\dots,S_r$ such that for any $j\leq r$,
$$ \norm{\sum_{i\in S_j} v_iv_i^\intercal} \leq 4(1/r+\eps)+2(1/4+\eps)^2.$$
\end{corollary}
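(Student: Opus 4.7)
The plan is to lift the problem to a higher-dimensional space and apply \autoref{thm:main} to a carefully chosen strongly Rayleigh distribution on a larger ground set. Take the ground set to be $[m] \times [r]$ and let $\mu$ be the product distribution in which, independently for each $i \in [m]$, exactly one index $j \in [r]$ is picked uniformly at random. Then $\mu$ is homogeneous of degree $m$, and its generating polynomial
\[
g_\mu(z) \;=\; \prod_{i=1}^{m} \frac{1}{r}\,(z_{i,1} + \cdots + z_{i,r})
\]
is a product of real stable linear forms and therefore real stable; hence $\mu$ is strongly Rayleigh. Moreover, by \eqref{eq:marginaldef} the marginal probability of every element $(i,j)$ equals $1/r$.

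Next, I would assign vectors by embedding into an $rd$-dimensional space: for each $(i,j)$, let $u_{i,j} \in \R^{rd}$ be the vector whose $j$-th block of coordinates equals $v_i$ and whose other blocks are zero. Isotropy is inherited block by block, since $\sum_{i,j} u_{i,j} u_{i,j}^\intercal$ is block-diagonal with $\sum_i v_i v_i^\intercal = I$ in each of the $r$ diagonal blocks, and therefore equals $I_{rd}$. The squared norm of every $u_{i,j}$ equals $\norm{v_i}^2 \leq \eps$, so the hypotheses of \autoref{thm:main} are met with $\eps_1 = 1/r$ and $\eps_2 = \eps$.

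Applying \autoref{thm:main} then yields some $S$ in the support of $\mu$ with
\[
\Bigl\|\sum_{(i,j) \in S} u_{i,j} u_{i,j}^\intercal\Bigr\| \;\leq\; 4(1/r + \eps) + 2(1/r + \eps)^2.
\]
Because $S$ contains exactly one pair $(i, \sigma(i))$ for each $i$, the above matrix is block-diagonal with block $j$ equal to $\sum_{i \in S_j} v_i v_i^\intercal$, where $S_j := \{i : \sigma(i) = j\}$. Since the spectral norm of a block-diagonal matrix equals the maximum spectral norm of its blocks, the bound passes to every $S_j$ simultaneously, yielding the desired $r$-partition; the assumption $r \geq 5$ is only needed to guarantee the right-hand side is below $1$ for small $\eps$. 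The main (and essentially only) conceptual step is choosing the lift so that $\mu$ has marginals $1/r$ while the spectral norm of the lifted sum decouples into the per-part spectral norms that $\text{KS}_r$ asks us to control; beyond verifying real stability and isotropy of the lift, no serious obstacle is foreseen.
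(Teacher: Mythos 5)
Your proof is correct and matches the paper's argument essentially verbatim: the same lift to $\R^{rd}$ via block-embedded vectors, the same product distribution on $[m]\times[r]$ picking one index per $i$ uniformly (with marginals $1/r$), and the same invocation of \autoref{thm:main} followed by the block-diagonal observation. The only cosmetic difference is that you spell out the generating polynomial and its real stability explicitly, whereas the paper simply cites that product distributions are strongly Rayleigh.
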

\begin{proof} The proof is inspired by the lifting idea in \cite{MSS13}.
For $i\in [m]$ and $j\in [r]$ let $w_{i,j}\in\R^{d\cdot r}$ be the directed sum of $r$ vectors all of which are $0^d$ except the $j$-th one which is $v_i$, i.e.,
$$w_{i,1}=\begin{pmatrix}v_i\\ 0^d \\ \vdots \\ 0^d\end{pmatrix}, w_{i,2} = \begin{pmatrix}0^d \\ v_i \\ \vdots \\ 0^d\end{pmatrix}, \text{ and so on.} $$
Let $E=\{(i,j): i\in [m],j\in[r]\}$ and let $\mu:2^E\to\R_+$ be a product distribution defined in a way that selects exactly one pair $(i,j)\in E$ for any $i\in [m]$ uniformly at random. Observe that there are $m^r$ sets in the support of $\mu$ each of size exactly $m$ and each has probability $1/r^m$. Therefore, $\mu$ is a homogeneous probability distribution and the marginal probability of each element of $E$ is exactly $1/r$. In addition, since product distributions are strongly Rayleigh, $\mu$ is strongly Rayleigh.
Therefore, by \autoref{thm:main}, there is a set $S$ in the support of $\mu$ such that
$$ \norm{\sum_{(i,j)\in S} w_{i,j}w_{i,j}^\intercal} \leq \alpha,$$
for $\alpha=4(1/r+\eps) + 2(1/r+\eps)^2$.
Now, let $S_j=\{i: (i,j)\in S\}$. It follows that for any $j\in [r]$,
$$ \norm{\sum_{i\in S_j} v_iv_i^\intercal} \leq \alpha,$$
as desired.
\end{proof}

\subsection{The Thin Basis Problem}
In this section we use the main theorem to prove the existence of a thin basis among a given set of isotropic vectors. In the next section, we will use this theorem to prove the existence of  thin trees in graphs, i.e., trees which are ``sparse'' in all cuts of a given graph. 
 
Given a set of vectors $v_1,\dots,v_m\in \R^d$ in the isotropic position,
$$ \sum_{i=1}^m v_iv_i^\intercal = I,$$
we want to find a sufficient condition for the existence of a {\em thin basis}. Recall that a set $T\subset [m]$ is a basis if $|T|=d$ and all vectors indexed by $T$ are linearly independent. 
We say $T$ is $\alpha$-thin if
$$ \norm{\sum_{i\in T} v_iv_i^\intercal} \leq \alpha.$$
An obvious  necessary condition for the existence of an $\alpha$-thin basis is that the set
$$ V(\alpha):=\{v_i: \norm{v_i}^2 \leq \alpha\},$$
contains a basis. We show that there exist universal constants $C_1,C_2>0$ such that the existence of $C_1/\alpha$ disjoint bases in $V(C_2\cdot \alpha)$ is a sufficient condition.
\begin{theorem}\label{thm:thinbases}
	Given a set of vectors $v_1,\dots,v_m\in\R^d$ in the sub-isotropic position
	$$ \sum_{i=1}^m v_iv_i^\intercal \preceq I,$$
	if for all $1\leq i\leq m$, $\norm{v_i}^2\leq \eps$,
	and the set $\{v_1,\dots,v_m\}$ contains $k$ disjoint bases, then there exists an $O(\eps+1/k)$-thin basis $T\subseteq [m]$.
\end{theorem}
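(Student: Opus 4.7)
The plan is to invoke \autoref{thm:main} with a homogeneous strongly Rayleigh distribution $\mu$ on $[m]$ supported on bases of the linear matroid spanned by $v_1,\ldots,v_m$, whose marginals are all $O(1/k)$. Combined with the norm bound $\norm{v_i}^2 \leq \eps$, \autoref{thm:main} then delivers (after a preliminary isotropization) a basis $T$ satisfying $\norm{\sum_{i \in T} v_iv_i^\intercal} = O(\eps+1/k)$, as required.

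I would first reduce to the isotropic case. Since $\{v_1,\ldots,v_m\}$ contains a basis, $\Sigma := \sum_i v_iv_i^\intercal$ has full rank with $\Sigma \preceq I$. Eigendecomposing $I - \Sigma = \sum_j \lambda_j q_jq_j^\intercal$ and splitting each term into $k$ equal rescaled copies produces auxiliary vectors $u_1,\ldots,u_{m'}$ with $\norm{u_l}^2 \leq 1/k$ and $\sum_l u_lu_l^\intercal = I - \Sigma$. The extended vector family is then isotropic, and the $k$ disjoint bases of the original matroid remain bases of the enlarged matroid (no auxiliary vectors are needed).

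Second, I would build $\mu$ as a weighted projection DPP supported on $[m]$,
$$\mu_w(S) \propto \Big(\prod_{i \in S} w_i\Big)\, \det(V_S V_S^\intercal),$$
over size-$d$ subsets $S\subseteq[m]$. The fractional point $x^\ast = \frac{1}{k}\sum_j \mathbf{1}_{B_j}$ lies in the base polytope with every coordinate $\leq 1/k$; by a max-entropy / convex duality argument, one chooses weights $w_i > 0$ so that the marginals of $\mu_w$ match (or dominate coordinatewise) $x^\ast$. Because rescaling variables preserves real stability and projection DPPs are real stable, $\mu_w$ is strongly Rayleigh, homogeneous of degree $d$, and supported entirely on bases of the original matroid. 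Applying \autoref{thm:main} to the extended isotropic family with distribution $\mu_w$ (extended by marginal $0$ on the auxiliary vectors), with $\eps_1 = O(1/k)$ and $\eps_2 = O(\eps + 1/k)$, produces the required $T \subseteq [m]$.

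The main obstacle I anticipate is the second step: showing that the prescribed $x^\ast$ is realizable as the marginal vector of a weighted projection DPP for an arbitrary linear matroid. The strong Rayleigh property is automatic, but matching marginals requires $x^\ast$ to lie in the relative interior of the base polytope and the corresponding convex program over the weights $w$ to be feasible; this likely calls for a mild perturbation of $x^\ast$ and care in defining weights for elements outside $B_1\cup\cdots\cup B_k$ while preserving the $O(1/k)$ marginal bound.
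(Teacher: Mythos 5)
Your proposal follows essentially the same route as the paper: construct a weighted determinantal (hence strongly Rayleigh and homogeneous of degree $d$) measure $\mu_\lambda(T)\propto\det(\sum_{i\in T}\lambda_i v_iv_i^\intercal)$ supported on bases, use a max-entropy / Lagrange-duality argument to realize any marginal vector in the relative interior of the base polytope, obtain small marginals by perturbing $\tfrac{1}{k}\sum_j\bone_{B_j}$ into the relative interior, and then invoke \autoref{thm:main} after isotropizing with zero-marginal auxiliary vectors of small norm. The technical obstacle you flag (relative-interior membership forcing a perturbation of $x^\ast$) is precisely how the paper handles it, and your explicit treatment of the isotropization step is in fact more careful than the paper's, which passes over it silently.
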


We will use \autoref{thm:main} to prove the above theorem. To use \autoref{thm:main} we need to define a strongly Rayleigh distribution on $[m]$ with small marginal probabilities.
This is proved in the following proposition.
\begin{proposition}\label{prop:determinantalconst}
Given a set of vectors $v_1,\dots,v_m\in\R^d$ that contains $k$ disjoint bases, there is a strongly Rayleigh probability distribution $\mu:2^{[m]}\to\R_+$ supported on the bases such that the marginal probability of each element is at most $O(1/k)$.
\end{proposition}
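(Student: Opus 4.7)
The plan is to construct $\mu$ as a reweighted determinantal measure on the bases of the linear matroid $M$ represented by $v_1,\dots,v_m$. For weights $\lambda=(\lambda_1,\dots,\lambda_m)\in\R_{\geq 0}^m$, define
\[
\mu_\lambda(B)\;\propto\;\Bigl(\prod_{i\in B}\lambda_i\Bigr)\det\bigl(V_B V_B^\intercal\bigr),
\]
where $V_B$ is the $d\times d$ matrix with columns $\{v_i\}_{i\in B}$; the determinant vanishes on non-bases, so $\mu_\lambda$ is automatically supported on bases of $M$. By the Cauchy--Binet identity the generating polynomial of $\mu_\lambda$ is proportional to
\[
\det\Bigl(\sum_i \lambda_i z_i v_i v_i^\intercal\Bigr),
\]
which is the determinant of a positive linear combination of PSD rank-one forms in $z$ and hence real stable; so $\mu_\lambda$ is a homogeneous strongly Rayleigh distribution of degree $d$ for every $\lambda\geq 0$. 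A direct calculation using $\partial_i\log\det \Sigma_\lambda=v_i^\intercal \Sigma_\lambda^{-1}v_i$, where $\Sigma_\lambda=\sum_j \lambda_j v_j v_j^\intercal$, yields the marginal
\[
\PP{\mu_\lambda}{i\in B}\;=\;\lambda_i\,v_i^\intercal\,\Sigma_\lambda^{-1}\,v_i.
\]

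What remains is to choose $\lambda$ so that every marginal is $O(1/k)$. Let $V=\bigcup_{j=1}^k B_j$, and set $x^\star\in\R_{\geq 0}^m$ by $x^\star_i=1/k$ for $i\in V$ and $0$ otherwise. Because the $B_j$ are disjoint, $|V|=kd$. For any $S\subseteq V$ each intersection $S\cap B_j$ is an independent subset of a basis and so $|S\cap B_j|\leq\rank(S)$; summing over $j$ yields $|S|\leq k\cdot\rank(S)$, equivalently $\sum_{i\in S}x^\star_i\leq\rank(S)$. Together with $\sum_i x^\star_i=d$, this places $x^\star$ in the base polytope of $M$.

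To realize a given \emph{interior} marginal vector $x$ I use the standard convex-duality formulation: the (scale-invariant, strictly concave modulo scaling) program
\[
\max_{\lambda>0}\;\Bigl[\sum_i x_i\log\lambda_i \;-\;\log\det \Sigma_\lambda\Bigr]
\]
admits a maximizer whose first-order optimality condition is $\lambda_i v_i^\intercal \Sigma_\lambda^{-1}v_i=x_i$, so the marginals of $\mu_\lambda$ are exactly $x$. The main obstacle is that $x^\star$ may lie on the boundary of the base polytope---both because $x^\star_i=0$ for $i\notin V$ and because possibly $|S|=k\cdot\rank(S)$ for some proper $S\subsetneq V$. I handle this by perturbing to $x^\eps=(1-\eps)x^\star+\eps\,x_0$ for a fixed strictly interior point $x_0$ of the base polytope of $M$ (such $x_0$ exists after discarding loops, while coloops are ruled out by $k\geq 2$). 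Since $\|x_0\|_\infty\leq 1$, for any $\eps\leq 1/k$ every coordinate of $x^\eps$ is at most $2/k$, and interiority of $x^\eps$ follows from that of $x_0$ since any tight rank inequality at $x^\star$ is strict at $x_0$. Solving the convex program for this $x^\eps$ produces positive weights $\lambda^\eps$ for which $\mu:=\mu_{\lambda^\eps}$ is a homogeneous strongly Rayleigh distribution supported on bases with every marginal at most $2/k=O(1/k)$.
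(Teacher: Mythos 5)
Your proof follows essentially the same route as the paper: define the reweighted determinantal measure $\mu_\lambda(B)\propto\bigl(\prod_{i\in B}\lambda_i\bigr)\det(V_BV_B^\intercal)$, show via a convex-duality/entropy argument that any interior point of the base polytope is realizable as the marginal vector of some $\mu_\lambda$, and then perturb the average $\frac1k\sum_j\bone_{B_j}$ of the disjoint bases slightly inward to get marginals $O(1/k)$. The only cosmetic difference is that you write the dual maximization $\max_\lambda\sum_i x_i\log\lambda_i-\log\det\Sigma_\lambda$ directly, whereas the paper states the primal relative-entropy minimization and reaches the same $\lambda$ through Lagrange duality.
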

Now, \autoref{thm:thinbases} follows simply from the above proposition. Letting $\mu$ be  defined as above, we get $\eps_1=\eps$ and $\eps_2=O(1/k)$ in \autoref{thm:main} which implies the existence of a basis $T\subseteq [m]$ such that
$$ \norm{\sum_{i\in T} v_iv_i^\intercal} \leq O(\eps+1/k),$$
as desired.

In the rest of this section we prove the above proposition. 
In our proof $\mu$ will in fact be a homogeneous \emph{determinantal} probability distribution. We say $\mu:2^{[m]}\to\R_+$ is a determinantal probability distribution  
if there is a PSD matrix $M\in\R^{m\times m}$ such that
for any set $T\subseteq [m]$,
$$ \PP{S\sim\mu}{T\subseteq S} = \det(M_{T,T}), $$
where $M_{T,T}$ is the principal submatrix of $M$ whose rows and columns are indexed by $T$.
It is proved in \cite{BBL09} that any determinantal probability distribution is a strongly Rayleigh measure, so this is sufficient for our purpose.
In fact, we will find nonnegative weights $\lambda:[m]\to\R_+$ and for any basis $T$ we will let
\begin{equation} \mu_\lambda(T) \propto \det\left(\sum_{i\in T} \lambda_i v_iv_i^\intercal\right).\label{eq:mulambda}	
\end{equation}
It follows by the Cauchy-Binet identity that for any $\lambda$, such a distribution is determinantal with respect to the gram matrix $$M(i,j)=\sqrt{\lambda_i\lambda_j}\left\langle B^{-1/2} v_i,B^{-1/2} v_j\right\rangle$$ 
 where $B=\sum_{i=1}^m \lambda_i v_iv_i^\intercal$. 
So, all we need to do is find $\{\lambda_i\}_{1\leq i\leq m}$ such that the marginal probability of each element in $\mu_\lambda$ is $O(1/k)$. 

For any basis $T\subset [m]$ let $\bone_T\in \R^m$ be the indicator vector of the set $T$. Let $P$ be the convex hull of bases' indicator vectors,
$$ P:=\conv\{\bone_T: T\text{ is a basis}\}.$$
Recall that a point $x$ is in the \emph{relative interior} of $P$, $x\in \interior P$, if and only if $x$ can be written as a convex combination of all of the vertices of $P$ with strictly positive coefficients.

We find the weights in two steps. First, we show that for any point $x\in \interior P$, there exist weights $\lambda:[m]\to\R$ such that for any $i$,
$$ \PP{S\sim\mu_\lambda}{i\in S} = x(i),$$
where $x(i)$ is the $i$-th coordinate of $x$ and $\mu_\lambda$ is  defined as in \eqref{eq:mulambda}.
Then, we show that there exists a point $x\in \interior P$ such that for all $i$, $x(i) \leq O(1/k)$.
\begin{lemma}
	For any  $x\in\interior P$ there exist $\lambda:[m]\to\R_+$ such that the marginal probability of each element $i$ in $\mu_\lambda$ is $x(i)$.
\end{lemma}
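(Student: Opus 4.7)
The plan is to exhibit $\lambda$ as the exponential of a minimizer of a convex function whose critical points are exactly the marginal-matching solutions. First I would simplify the marginal expression for $\mu_\lambda$. Using $\det(\sum_{i\in T} \lambda_i v_iv_i^\intercal) = \prod_{i\in T}\lambda_i \cdot \det(V_T)^2$ for bases $T$ (and $0$ otherwise, where $V_T$ has columns $v_i$ for $i\in T$), Cauchy--Binet rewrites the normalizing constant of $\mu_\lambda$ as $\det B(\lambda)$ where $B(\lambda) := \sum_j \lambda_j v_jv_j^\intercal$. Differentiating $\log\det B(\lambda)$ in $\lambda_i$ yields $v_i^\intercal B(\lambda)^{-1} v_i$, which lets me identify
$$\sum_{T\ni i}\prod_{j\in T}\lambda_j \det(V_T)^2 \;=\; \lambda_i \det B(\lambda)\cdot v_i^\intercal B(\lambda)^{-1}v_i.$$
Dividing through by $\det B(\lambda)$, the marginal of $i$ under $\mu_\lambda$ equals $\lambda_i v_i^\intercal B(\lambda)^{-1}v_i$, so the task reduces to finding $\lambda > 0$ with $\lambda_i v_i^\intercal B(\lambda)^{-1}v_i = x(i)$ for every $i\in[m]$.

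Next, substituting $\lambda_i = e^{t_i}$, I would consider
$$F(t) \;:=\; \log\det B(e^t) - \langle x, t\rangle \;=\; \log\Bigl(\sum_T w(T)\, e^{\langle \bone_T,\, t\rangle}\Bigr) - \langle x, t\rangle,$$
where $w(T) := \det(V_T)^2$ (nonzero exactly when $T$ is a basis) and the second equality is again Cauchy--Binet. The first term is a log-sum-exp and hence convex in $t$, so $F$ is convex. A direct gradient computation gives $\nabla_i F(t) = \lambda_i v_i^\intercal B(\lambda)^{-1}v_i - x(i)$, so any critical point $t^*$ of $F$ yields $\lambda^* := e^{t^*}$ satisfying the desired marginal equations.

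The main difficulty is showing that $F$ attains its minimum. Let $N := \{u\in\R^m: \langle u,\bone_T\rangle \text{ is constant over all bases } T\}$. Since every $y \in P$ is a convex combination of the $\bone_T$'s, $\langle u,y\rangle = \langle u,\bone_T\rangle$ for all $u\in N$, and a direct substitution shows $F(t+ru) = F(t)$ for every $u\in N$; so $F$ is invariant under translations by $N$. For a direction $u\notin N$ the hypothesis $x\in\interior P$ is used decisively: I can write $x = \sum_T \alpha_T \bone_T$ with $\alpha_T > 0$ for \emph{every} basis $T$, and because $\langle u, \bone_T\rangle$ is not constant in $T$, averaging yields $\langle u,x\rangle < \max_T \langle u,\bone_T\rangle =: M(u)$. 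Lower-bounding the log-sum-exp by any single term with $\langle u,\bone_T\rangle = M(u)$ then gives
$$F(t_0 + ru) \;\geq\; r\bigl(M(u) - \langle u, x\rangle\bigr) + O(1) \;\to\; +\infty$$
as $r\to\infty$. By convexity plus coercivity modulo $N$, $F$ attains a minimum at some $t^*$, and $\lambda^* := e^{t^*}$ supplies the required weights. The anticipated obstacle is purely technical: verifying that the ray-coercivity established above lifts to coercivity of $F$ on the quotient $\R^m/N$, which is standard for convex functions via a compactness argument on the unit sphere of $N^\perp$.
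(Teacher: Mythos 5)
Your proof is correct, and it takes a genuinely different (though dual) route from the paper's. The paper sets up the \emph{primal} entropy-maximization program over distributions $p$ with the prescribed marginals, invokes Slater's condition to get zero duality gap, and reads off the exponential form $p^*(T) = \prod_{i\in T}\lambda_i\,\mu^*(T)$ from the Lagrangian stationarity condition. You instead work directly with (what amounts to) the \emph{Lagrange dual}: the log-partition function $F(t)=\log\det B(e^t)-\langle x,t\rangle$. Your computation that the marginal of $i$ under $\mu_\lambda$ equals $\lambda_i v_i^\intercal B(\lambda)^{-1}v_i$, and hence that $\nabla F=0$ is exactly the marginal-matching condition, is a clean substitute for the paper's Lagrangian stationarity step. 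The real payoff of your route is that you \emph{prove} the dual optimum is attained, via $N$-invariance plus ray-coercivity in every direction of $N^\perp$, using a compactness argument on the unit sphere of $N^\perp$ and the fact that $M(u)-\langle u,x\rangle$ is a continuous positive function there (the $O(1)$ term is uniformly bounded since there are finitely many bases). The paper's argument, by contrast, leans on Slater's condition to guarantee attainment of the dual variables $\gamma_i$; your version is more self-contained and makes the role of the relative-interior hypothesis transparent, since $x\in\interior P$ is used precisely to force strict inequality $\langle u,x\rangle < \max_T\langle u,\bone_T\rangle$ for $u\notin N$. One small remark worth making explicit when you finish: at the minimizer $t^*$ over $t_0+N^\perp$, $\nabla F(t^*)$ lies in $N$ by first-order conditions along $N^\perp$, and also in $N^\perp$ by $N$-invariance of $F$; hence $\nabla F(t^*)=0$ globally, as needed.
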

\begin{proof}
Let $\mu^*:=\mu_\bone$ be the (determinantal) distribution where $\lambda_i=1$ for all $i$.
	The idea is to find a distribution $p(.)$ maximizing the
	relative entropy with respect to $\mu^*$ and preserves $x$ as the marginal probabilities. 
	This is analogous to the recent applications of maximum entropy distributions in approximation algorithms \cite{AGMOS10,SV14}.
	
Consider the following entropy maximization convex program.
\begin{equation}
\begin{aligned}
\min \hspace{5ex} & \sum_{T} p(T)\cdot \log \frac{p(T)}{\mu^*(T)} &\\
\st \hspace{2ex} & \sum_{T: i\in T} p(T) = x(i) & \forall i,\\
& p(T) \geq 0.
\end{aligned}	
\label{cp:maxentropy}
\end{equation}	
Note that  any feasible solution  satisfies  $\sum_T p(T)=1$ so we do not need to add this as a constraint.
First of all, since $x\in\interior P$, there exists a distribution $p(.)$ such that for all bases $T$, $p(T)>0$. So, the Slater condition holds and the duality gap of the above program is zero. 

Secondly, we use the Lagrange duality to characterize the optimum solution of the above convex program.
For any element $i$ let $\gamma_i$ be the Lagrange dual variable of the first constraint. The Lagrangian $L(p,\gamma)$ is defined as follows:
$$ L(p,\gamma) = \inf_{p\geq 0} \sum_T p(T)\cdot \log\frac{p(T)}{\mu^*(T)} - \sum_i \gamma_i \sum_{T:e\in T} (p(T)-x(i))$$
Let $p^*$ be the optimum $p$, letting the gradient of the RHS equal to zero we obtain, for any bases $T$,
$$ \log\frac{p^*(T)}{\mu^*(T)} + 1 = \sum_{i\in T} \gamma_i. $$
For all $i$, let $\lambda_i = \exp(\gamma_i - 1/d)$, where $d$ is the dimension of the $v_i$'s. Then, we get
\begin{eqnarray*} p^*(T) &=& \prod_{i\in T} \lambda_i \cdot \mu^*(T) \\
&=&	\prod_{i\in T} \lambda_i \cdot \det\left(\sum_{i \in T} v_iv_i^\intercal\right) \\
&=& \det\left(\sum_{i\in T} \lambda_i v_iv_i^\intercal\right).
\end{eqnarray*}
Therefore $p^*\equiv\mu_\lambda$.
Since the duality gap is zero, the above $p^*$ is indeed an optimal solution of the convex program~\eqref{cp:maxentropy}. Therefore, the marginal probability of every element $i$ with respect to $p^*$ ($\mu_\lambda$) is equal to $x(i)$.
\end{proof}

\begin{lemma}
	If $\{v_1,\dots, v_m\}$ contains $k$ disjoint bases, then there exists a point $x\in\interior P$, such that $x(i)=O(1/k)$ for all $i$.
\end{lemma}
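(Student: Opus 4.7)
The plan is to combine a convex combination of the $k$ disjoint bases (which automatically gives small coordinates) with a strictly interior point, so as to satisfy both requirements of the lemma simultaneously.

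First I would use the $k$ disjoint bases directly. Let $T_1,\ldots,T_k$ be the guaranteed disjoint bases and set
$$ y = \frac{1}{k}\sum_{j=1}^{k} \bone_{T_j}. $$
Since the $T_j$ are pairwise disjoint, each coordinate of $y$ satisfies $y(i)\leq 1/k$, because the element $i$ belongs to at most one of the $T_j$. Clearly $y\in P$, but $y$ need not lie in $\interior P$: a basis $T$ that is disjoint from all of $T_1,\ldots,T_k$ receives no weight in the representation above.

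To fix this, I would perturb $y$ toward an obviously interior point. Let $\mathcal{B}$ denote the set of all bases of $\{v_1,\ldots,v_m\}$, and set
$$ z = \frac{1}{|\mathcal{B}|}\sum_{T\in\mathcal{B}} \bone_T. $$
By construction $z$ is a convex combination of \emph{all} vertices of $P$ with strictly positive coefficients, so $z\in\interior P$. For a parameter $\delta\in(0,1]$ to be chosen, define
$$ x = (1-\delta)\, y + \delta\, z. $$
Mixing any point of $P$ with a relative interior point keeps us in the relative interior, so $x\in\interior P$. On the coordinate side, $z(i)\le 1$ (it is a convex combination of $0/1$ vectors), so
$$ x(i) \leq (1-\delta)\cdot \frac{1}{k} + \delta\cdot 1 \leq \frac{1}{k} + \delta. $$
Choosing $\delta = 1/k$ gives $x(i)\leq 2/k = O(1/k)$ for every $i$, completing the proof.

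No step looks like a serious obstacle; the only subtlety is recognizing that $y$ alone may fail to be in $\interior P$ when some basis is missed by $T_1,\ldots,T_k$, and that averaging with the uniform distribution on all bases is a clean way to fix this without harming the marginal bound by more than a constant factor.
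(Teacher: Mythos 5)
Your proof is correct and follows essentially the same strategy as the paper's: form the average of the $k$ disjoint bases' indicator vectors (giving per-coordinate bound $1/k$), then perturb toward an arbitrary relative-interior point to enter $\interior P$ while only marginally increasing the coordinates. The only difference is that you make the perturbation size explicit ($\delta = 1/k$) where the paper leaves it as ``small enough,'' which is a harmless refinement.
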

\begin{proof}
	Let $T_1,\dots, T_k$ be the promised disjoint bases. Let $$x_0=\frac{\bone_{T_1}+\dots+\bone_{T_k}}{k}.$$
	The above is a convex combination of the vertices of $P$; so $x_0\in P$. We now perturb $x_0$ by a small amount to find a point in $\interior P$. Let $x_1$ be an arbitrary point in $\interior P$ (such as the average of all vertices). For any $\epsilon>0$, the point $x=(1-\epsilon)x_0+\epsilon x_1\in \interior P$. If $\epsilon$ is small enough, we get $x(i)=O(1/k)$ which proves the claim.
\end{proof}

This completes the proof of \autoref{prop:determinantalconst}.
\subsection{Spectrally Thin Trees}
For a graph $G=(V,E)$, the Laplacian of $G$, $L_G$, is defined as follows: For a vertex $i\in V$ let $\bone_i\in\R^V$ be the vector that is one at $i$ and zero everywhere else.
Fix an arbitrary orientation on the edges of $E$ and
let $b_e=\bone_i-\bone_j$ for an edge $e$ oriented from $i$ to $j$. Then,
$$ L_G=\sum_{e\in E} b_eb_e^\intercal.$$
We use $L_G^{\dagger}$ to denote the pseudo-inverse of $L_G$. Also, for a set $T\subseteq E$, we write
$$ L_T=\sum_{e\in T} b_eb_e^\intercal.$$
We say a spanning tree $T\subseteq E$ is $\alpha$-thin with respect to $G$ if for any set $S\subset V$,
$$ |T(S,\overline{S})|\leq \alpha \cdot |E(S,\overline{S})|,$$
where $T(S,\overline{S}),E(S,\overline{S})$ are the set of edges cross the cut $(S,\overline{S})$ in $T,G$ respectively. 
We say a spanning tree $T$ is $\alpha$-spectrally thin with respect to $G$ if 
$$ L_T \preceq \alpha\cdot L_G.$$
It is easy to see that spectral thinness is a generalization of the combinatorial thinness, i.e., if $T$ is $\alpha$-spectrally thin it is also $\alpha$-thin.

We say a graph $G$ is  $k$-edge connected if it has at least $k$ edges in any cut.
In recent works on Asymmetric TSP \cite{AGMOS10,OS11} it was shown that the existence of (combinatorially) thin trees in $k$-edge connected graphs
plays an important role in bounding the integrality gap of the natural linear programming relaxation of the Asymmetric TSP \cite{AO14}.

It turns out that the existence of spectrally thin trees is significantly easier to prove than combinatorially thin trees thanks to \autoref{thm:mss} of \cite{MSS13}.  
Given a graph $G=(V,E)$, Harvey and Olver \cite{HO14}  employ a recursive application of \cite{MSS13} and show that if for all edges $e\in E$, $b_e^\intercal L_G^{\dagger}b_e \leq \alpha$, then $G$ has an $O(\alpha)$-spectrally thin tree. The quantity $b_e L_G^{\dagger}b_e$ is  the effective resistance between the endpoints of $e$ when we replace every edge of $G$ with a resistor of resistance 1 \cite[Ch. 2]{LP13}. Unfortunately,
$k$-edge connectivity is a significantly weaker property than $\max_e b_e L_G^\dagger b_e\leq \alpha$ 
\cite{AO14}. So, this does not resolve the thin tree problem.

The main idea of \cite{AO14} is to slightly change the graph $G$ in order to decrease the effective resistance of edges while maintaining the size of the cuts intact. More specifically, to add a ``few'' edges $E'$ to $G$ such that in the new graph $G'=(V,E\cup E')$, the effective resistance of every edge of $E$ is small and the size of every cut of $G'$ is at most twice of that cut in $G$. If we can prove that $G'$ has a spectrally thin tree $T\subseteq E$ such a tree is combinatorially thin with respect to $G$ because $G,G'$ have the same cut structure. 
To show that $G'$ has a spectrally thin tree we need to answer the following question. 
\begin{problem}\label{pr:Feffres}
Given a graph $G=(V,E)$, suppose there is a set $F\subseteq E$ such that $(V,F)$ is $k$-edge connected, and that for all $e\in F$, $b_e^\intercal L_G^\dagger b_e \leq \alpha$.
Can we say that $G$ has  a $C\cdot \max\{\alpha,1/k\}$-spectrally thin tree for a universal constant $C$? 
\end{problem}

We use \autoref{thm:thinbases} to answer the above question affirmatively. 
Note that the above question cannot be answered by \autoref{thm:mss}. 
One can use \autoref{thm:mss} to show that the set $F$ can be partitioned  into two  sets $F_1,F_2$ such that each $F_i$ is $1/2+O(\alpha)$-spectrally thin, but \autoref{thm:mss} gives no guarantee on the connectivity of $F_i$'s. On the other hand, once we apply our main theorem to a strongly Rayleigh distribution supported on connected subgraphs of $G$, e.g. the spanning trees of $G$, we get connectivity for free.

\begin{corollary}
Given a graph $G=(V,E)$ and a set $F\subseteq E$ such that $(V,F)$ is $k$-edge connected, if for $\eps>0$ and any edge $e\in F$, $b_e^\intercal L_G^{\dagger} b_e \leq \eps$, then $G$ has an $O(1/k+\eps)$ spectrally thin tree.
\end{corollary}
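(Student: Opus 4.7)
The plan is to reduce the problem to \autoref{thm:thinbases} via the standard electrical embedding of the graph. Define the vectors
\[ v_e \;=\; L_G^{\dagger/2} b_e \qquad \text{for } e\in F, \]
viewed as elements of the image of $L_G$, which has dimension $|V|-1$ (assuming $G$ connected; otherwise handle each component). These satisfy $\sum_{e\in F} v_e v_e^\intercal \preceq \sum_{e\in E} v_e v_e^\intercal = L_G^{\dagger/2} L_G L_G^{\dagger/2} = \Pi$, the identity on the image, so the vectors are in sub-isotropic position. The hypothesis $b_e^\intercal L_G^\dagger b_e \leq \eps$ translates immediately to $\|v_e\|^2 \leq \eps$.

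Next I would establish the existence of many disjoint bases among $\{v_e : e\in F\}$. Bases of this vector family coincide with spanning trees of $(V,F)$, since the graphic matroid is the linear matroid of the signed incidence vectors $\{b_e\}$, and applying the invertible linear map $L_G^{\dagger/2}$ preserves linear independence on the image. Because $(V,F)$ is $k$-edge connected, the Nash-Williams / Tutte tree-packing theorem guarantees that $(V,F)$ contains at least $\lfloor k/2 \rfloor$ edge-disjoint spanning trees, hence at least $\lfloor k/2 \rfloor$ disjoint bases in the vector sense.

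Now I would invoke \autoref{thm:thinbases} with $\eps$ and $k' = \lfloor k/2 \rfloor$: it yields a basis $T\subseteq F$, i.e.\ a spanning tree, satisfying
\[ \Bigl\| \sum_{e\in T} v_e v_e^\intercal \Bigr\| \;=\; \bigl\| L_G^{\dagger/2} L_T L_G^{\dagger/2} \bigr\| \;\leq\; O\!\bigl(\eps + 1/k\bigr). \]
Since $L_T$ and $L_G$ share the same kernel (the all-ones vector, because $T$ spans $V$), the operator-norm bound on the left is equivalent to the Loewner inequality $L_T \preceq O(\eps + 1/k)\, L_G$, which is exactly the statement that $T$ is $O(\eps + 1/k)$-spectrally thin with respect to $G$.

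The only real subtlety is the matroid identification and the tree-packing step; once those are in place, the corollary is a direct translation of \autoref{thm:thinbases}. I would expect the matroid/linear-algebra bookkeeping (working on the image of $L_G$, matching spanning trees with vector bases of size $|V|-1$, and converting operator-norm bounds on $L_G^{\dagger/2} L_T L_G^{\dagger/2}$ into Loewner inequalities for $L_T$ vs.\ $L_G$) to be the most error-prone part, but none of it presents a genuine obstacle.
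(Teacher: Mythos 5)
Your proof is correct and follows essentially the same route as the paper: define $v_e = L_G^{\dagger/2} b_e$, check sub-isotropy and small norms, invoke Nash--Williams for $\lfloor k/2\rfloor$ disjoint spanning trees (hence bases), apply \autoref{thm:thinbases}, and translate the operator-norm bound back into the Loewner inequality $L_T \preceq O(\eps+1/k) L_G$. The only cosmetic difference is that you work explicitly on the image of $L_G$ and argue via the shared kernel, whereas the paper proves the Loewner inequality by the substitution $x = L_G^{1/2}y$; these are equivalent.
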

\begin{proof}
Let $L_G^{\dagger/2}$ be the square root of $L_G^{\dagger}$. Note that since $L_G^\dagger \succeq 0$, its square root is well defined.
For all $e\in F$, let  
$$v_e = L_G^{\dagger/2} b_e.$$ 
Then, by the corollary's assumption, for each $e\in F$,
$$ \norm{v_e}^2 = b_e L_G^\dagger b_e \leq \eps,$$
and the vectors $\{v_e\}_{e\in F}$ are in sub-isotropic position,
\begin{eqnarray*}
 \sum_{e\in F} v_e v_e^\intercal &=& L_G^{\dagger/2} \left(\sum_{e\in F} v_e v_e^\intercal\right) L_G^{\dagger/2}\\
  &=& L_G^{\dagger/2} L_F L_G^{\dagger/2} \preceq I.
 \end{eqnarray*}
In addition, we can show that $\{v_e\}_{e\in F}$ contains $k/2$ disjoint bases. First of all, note that each basis of the vectors $\{v_e\}_{e\in F}$ corresponds to a spanning tree of the graph $(V,F)$. 
Nash-Williams \cite{NW61} proved that any $k$-edge connected graph has $k/2$ edge-disjoint spanning trees.
Since $(V,F)$ is $k$-edge connected, it has $k/2$ edge-disjoint spanning trees, and equivalently, $\{v_e\}_{e\in F}$ contains $k/2$ disjoint bases. 

Therefore, by \autoref{thm:thinbases},
 there exists a basis (i.e., a spanning tree) $T\subseteq F$ such that
\begin{equation} 
\norm{\sum_{e\in T} v_ev_e^\intercal } \leq \alpha,
\label{eq:LTnorm}
\end{equation}
for $\alpha=O(\eps+1/k)$. 
Fix an arbitrary vector $y\in\R^V$. We show that 
\begin{equation}
y^\intercal L_T y \leq \alpha\cdot y^\intercal L_G y,
\label{eq:LTnormy}
\end{equation}
and this completes the proof.
  By \eqref{eq:LTnorm} for any $x\in\R^V$,
  $$ x^\intercal \left(\sum_{e\in T} v_ev_e^\intercal\right) x \leq \alpha\cdot \norm{x}^2.$$
Let $x=L_G^{1/2}y$,  we get
$$ y^\intercal L_G^{1/2}  \left(L_G^{\dagger/2}\sum_{e\in T} b_eb_e^\intercal L_G^{\dagger/2}\right) L_G^{1/2} y \leq \alpha\cdot y^\intercal L_G y. $$
The above is the same as \eqref{eq:LTnormy} and we are done.
\end{proof}

The above corollary completely answers \autoref{pr:Feffres} but it is not enough for our purpose in \cite{AO14}; we need a slightly stronger statement. 
For a matrix $D\in\R^{V\times V}$ we say
$D\preceq_\square L_G,$ if for any set $S\subset V$,
$$ \bone_S^\intercal D \bone_S \leq \bone_S^\intercal L_G \bone_S,$$
where as usual $\bone_S\in\R^V$ is the indicator vector of the set $S$.
In the main theorem of \cite{AO14} we show that
for any $k$-edge-connected graph $G$ (for $k=7\log n$) there is a positive definite (PD) matrix $D\preceq_\square L_G$ and a set $F\subseteq E$ such that $(V,F)$ is $\Omega(k)$-edge-connected and 
$$ \max_{e\in F} b_e^\intercal D^{-1} b_e \leq \frac{\polylog(k)}{k}.$$
To show that $G$ has a combinatorially thin tree it is enough to show that there is a tree $T\subseteq E$ that is $\alpha$-spectrally thin w.r.t. $L_G+D$ for $\alpha=\polylog(k)/k$, i.e.,
$$ L_T \preceq \frac{\polylog(k)}{k} (L_G+D).$$
 Such a tree is $2\alpha$-combinatorially thin w.r.t. $G$ because $D\preceq_\square L_G$. 
Note that the above corollary does not imply $L_G+D$ has a spectrally thin tree because $D$ is not necessarily a Laplacian matrix.
Nonetheless, we can prove the existence of a spectrally thin tree with another application of \autoref{thm:thinbases}.
 
\begin{corollary}
\label{thm:spectralthintree}
Given a graph $G=(V,E)$, a PD matrix $D$, and $F\subseteq E$ such that $(V,F)$ is $k$-edge connected, if for any edge $e \in F$,
$$ b_e^\intercal D^{-1} b_e \leq \eps,$$
then
$G$ has a spanning tree $T\subseteq F$ such that
$$ L_T \preceq O(\eps+1/k)\cdot (L_G + D).$$
\end{corollary}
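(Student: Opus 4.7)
The plan is to reduce this to Theorem~\ref{thm:thinbases} by choosing vectors that make $L_G+D$ play the role of the identity. Set $M := L_G + D$; since $L_G \succeq 0$ and $D \succ 0$, the matrix $M$ is positive definite on all of $\R^V$, so $M^{-1/2}$ is well defined. For each $e \in F$ let $w_e := M^{-1/2} b_e$. I would then verify the three hypotheses of Theorem~\ref{thm:thinbases} for the family $\{w_e\}_{e \in F}$.

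For the norm bound, $M \succeq D \succ 0$ implies $M^{-1} \preceq D^{-1}$, so $\norm{w_e}^2 = b_e^\intercal M^{-1} b_e \le b_e^\intercal D^{-1} b_e \le \eps$. For the sub-isotropic condition, $F \subseteq E$ together with $D \succeq 0$ gives $L_F \preceq L_G \preceq L_G + D = M$, whence $\sum_{e \in F} w_e w_e^\intercal = M^{-1/2} L_F M^{-1/2} \preceq I$. For the disjoint-bases condition, Nash--Williams supplies $k/2$ edge-disjoint spanning trees $T_1,\dots,T_{k/2}$ of $(V,F)$; each $T_i$ consists of $n-1$ edges whose $b_e$'s are linearly independent, and since $M^{-1/2}$ is invertible, the $\{w_e : e \in T_i\}$ are linearly independent too. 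These are exactly bases of the common $(n-1)$-dimensional subspace $W := \operatorname{span}\{w_e : e \in F\}$; identifying $W$ with $\R^{n-1}$ via an isometry (exactly as the vectors in the preceding corollary tacitly lived in $\bone^\perp \cong \R^{n-1}$) places us within the hypotheses of Theorem~\ref{thm:thinbases}.

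Applying Theorem~\ref{thm:thinbases} then yields a basis $T \subseteq F$ of $W$---equivalently, a spanning tree of $(V,F)$ and hence of $G$---satisfying $\norm{\sum_{e \in T} w_e w_e^\intercal} \le O(\eps + 1/k)$. The operator norm measured on $W$ agrees with that on $\R^V$, because $\sum_{e\in T} w_e w_e^\intercal$ annihilates $W^\perp$. Substituting $w_e = M^{-1/2} b_e$ and conjugating by $M^{1/2}$ (the same change of variables used at the end of the previous corollary: for any $y \in \R^V$, set $x = M^{1/2} y$) converts this into $L_T \preceq O(\eps + 1/k) \cdot (L_G + D)$, which is what we want.

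The main algebraic step is the elementary inequality $M^{-1} \preceq D^{-1}$, which is what allows the hypothesis $b_e^\intercal D^{-1} b_e \le \eps$ to play the role that $b_e^\intercal L_G^\dagger b_e \le \eps$ played in the preceding corollary. The main bit of bookkeeping is the mismatch between the dimension $n$ of $\R^V$ and the size $n-1$ of a spanning tree; this is the one point that is not purely formal, and I would handle it exactly as in the previous corollary, by restricting attention to the $(n-1)$-dimensional span $W$ of the relevant vectors.
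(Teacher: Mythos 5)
Your proof is correct and follows essentially the same route the paper takes: conjugate by $(L_G+D)^{-1/2}$ to reduce to \autoref{thm:thinbases}, using $M^{-1}\preceq D^{-1}$ (i.e.\ \autoref{lem:matrixinvineq}) for the norm bound, $L_F\preceq L_G\preceq M$ for sub-isotropy, and Nash--Williams for the disjoint bases. The one place you are more careful than the paper is in flagging the dimension mismatch between $\R^V$ and the $(n-1)$-dimensional span of the $w_e$'s; the paper passes over this silently, but the hypothesis of \autoref{thm:thinbases} is stated in the sub-isotropic form $\sum v_iv_i^\intercal\preceq I$ precisely so that the ambient dimension does not matter, and restricting to the span as you do resolves the issue cleanly.
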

\begin{proof}
	The proof is very similar to \autoref{cor:mixedrealstability}.
	For any edge $e\in F$,
 let $v_e = (D+L_G)^{-1/2} b_e$. Note that since $D$ is PD, $D+L_G$ is PD and $(D+L_G)^{-1/2}$ is well defined.
 By the assumption, 
 $$\norm{v_e}^2 = b_e^\intercal (D+L_G)^{-1} b_e \leq b_e^\intercal D^{-1} b_e = \eps,$$
 where the inequality uses \autoref{lem:matrixinvineq}.
 In addition, the vectors are in sub-isotropic position,
 $$ \sum_{e\in F} v_e v_e^\intercal = (D+L_G)^{\dagger/2} L_F (D+L_G)^{\dagger/2} \preceq I.$$
The matrix PSD inequality uses that $L_F \preceq L_G \preceq D+L_G$. Furthermore, every basis of $\{v_e\}_{e\in E}$ is a spanning tree of $G$ and by $\Omega(k)$-connectivity of $F$, there are $\Omega(k)$-edge disjoint bases. 
 Therefore, by \autoref{thm:thinbases}, there is a tree $T\subseteq F$ such that
 $$ \norm{\sum_{e\in T} v_ev_e^\intercal} \leq \alpha,$$
 for $\alpha=O(\eps+1/k)$. Similar to \autoref{cor:mixedrealstability} this tree satisfies
 $$ L_T \preceq \alpha \cdot (L_G+D),$$
 and this completes the proof.
\end{proof}

\subsection{Proof Overview}
We build on the method of interlacing polynomials of 
\cite{MSS12,MSS13}.
Recall that an interlacing family of polynomials has the property that there is always a polynomial whose largest root is at most the largest root of the sum of the polynomials in the family.
First, we 
show that for any set of vectors assigned to the elements of a homogeneous strongly Rayleigh measure, the characteristic polynomials of natural quadratic forms associated with the samples of the distribution form an interlacing family.
This implies that there is a sample of the distribution such that the largest root of its characteristic polynomial is at most the largest root of the average of the characteristic polynomials of all samples of $\mu$. Then, we use  the multivariate barrier argument of \cite{MSS13} to upper-bound the largest root of our expected characteristic polynomial. 

Our proof has two main ingredients.
The first one is the construction of a new class of expected characteristic polynomials which are the  weighted average of the characteristic polynomials of the natural quadratic forms associated to the samples of the strongly Rayleigh distribution, where the weight of each polynomial is proportional to the probability of the corresponding sample set in the distribution. To show that the expected characteristic polynomial is real rooted we appeal to the theory of real stability. We show that our expected characteristic polynomial can be realized by applying $\prod_{i=1}^m(1-\partial/\partial^2_{z_i})$ operator to the real stable polynomial $g_\mu(z) \cdot \det(\sum_{i=1}^m z_i v_iv_i^\intercal)$, and then projecting all variables onto $x$. 

Our second ingredient is the extension of the multivariate barrier argument.
Unlike  \cite{MSS13}, here we need to prove an upper bound on the largest root of the mixed characteristic polynomial which is very close to zero. 
It turns out that the original idea of \cite{BSS14} 
that studies the behavior of the roots of a (univariate) polynomial $p(x)$ under the operator $1-\partial/\partial_x$ cannot establish upper bounds that are less than one.
Fortunately, here we need to study the behavior of the roots of a (multivariate) polynomial $p(z)$ under the operators $1-\partial/\partial^2_{z_i}$.
The $1-\partial/\partial^2_{z_i}$ operators allow us to impose very small shifts on the multivariate upper barrier assuming the  barrier functions are sufficiently small.
The intuition is that, since
$$ 1-\frac{\partial}{\partial^2_{z_i}} = \left(1-\frac{\partial}{\partial_{z_i}}\right)\cdot \left(1+\frac{\partial}{\partial_{z_i}}\right),$$
we  expect  $(1-\partial/\partial_{z_i})$ to shift the upper barrier by $1+\Theta(\delta)$ (for some $\delta$ depending on the value of the $i$-th barrier function) as proved in \cite{MSS13} and $(1+\partial/\partial_{z_i})$ to shift the upper barrier by $1-\Theta(\delta)$. Therefore, applying both operators  the upper barrier must be moved by no more than $\Theta(\delta)$.	

\section{Preliminaries}
We adopt a notation similar to \cite{MSS13}. We write ${[m]\choose k}$ to denote the collection of subsets of $[m]$ with exactly $k$ elements. 
We write $2^{[m]}$ to denote the family of all subsets of the set $[m]$. 
We write
$\partial_{z_i}$ to denote the operator that performs partial differentiation with respect to $z_i$. 
We use $\norm{v}$ to denote the Euclidean $2$-norm of a vector $x$. For a matrix $M$, we write $\norm{M}=\max_{\norm{x}=1}\norm{Mx}$ to denote the operator norm of $M$.
We use $\bone$ to denote the all $1$ vector. 
\subsection{Interlacing Families}
We  recall the definition of interlacing families of polynomials from \cite{MSS12}, and its main consequence.
\begin{definition}
We say that a real rooted polynomial $g(x) = \alpha_0 \prod_{i=1}^{m-1} (x-\alpha_i)$ interlaces a real rooted polynomial $f(x) = \beta_0 \prod_{i=1}^m (x-\beta_i)$ if
$$\beta_1 \leq \alpha_1 \leq \beta_2 \leq \alpha_2 \leq \ldots \leq \alpha_{m-1} \leq \beta_m. $$
\end{definition}
We say that polynomials $f_1,\ldots,f_k$ have a common interlacing if there is a polynomial $g$ such that $g$ interlaces all $f_i$.
The following lemma is proved in \cite{MSS12}.
\begin{lemma}
\label{lem:interlacringroot}
Let $f_1,\ldots,f_k$ be polynomials of the same degree that are real rooted and have positive leading coefficients. Define
$$ f_{\emptyset} = \sum_{i=1}^k f_i.$$
If $f_1,\ldots,f_k$ have a common interlacing, then there is an $i$ such that the largest root of $f_i$ is at most the largest root of $f_\emptyset$. 
\end{lemma}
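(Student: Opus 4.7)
The plan is to leverage the common interlacer $g$ to locate a region where the sum $f_\emptyset$ is forced to take a non-positive value, and then use a pigeonhole argument at the largest root of $f_\emptyset$ to identify an $f_i$ whose largest root is bounded above by that root.

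First I would let $\alpha$ denote the largest root of the common interlacer $g$. For each $i$, write $f_i(x) = \beta_0^{(i)} \prod_{j=1}^m (x - \beta_j^{(i)})$ with $\beta_0^{(i)} > 0$ and $\beta_1^{(i)} \leq \cdots \leq \beta_m^{(i)}$. Because $g$ interlaces $f_i$, the largest root of $g$ lies between the two largest roots of $f_i$, i.e.\ $\beta_{m-1}^{(i)} \leq \alpha \leq \beta_m^{(i)}$. Since $f_i$ has positive leading coefficient, its sign just above its largest root is positive, and between the two largest roots it is non-positive. Hence $f_i(\alpha) \leq 0$ for every $i$, so $f_\emptyset(\alpha) = \sum_i f_i(\alpha) \leq 0$.

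Next I would show $f_\emptyset$ has a real largest root $r \geq \alpha$. Since each $f_i$ has the same degree $m$ and positive leading coefficient, so does $f_\emptyset$, hence $f_\emptyset(x) \to +\infty$ as $x \to +\infty$. Combined with $f_\emptyset(\alpha) \leq 0$ and continuity, the intermediate value theorem yields a real zero of $f_\emptyset$ in $[\alpha, \infty)$. Let $r$ be the largest such zero; this is the largest real root of $f_\emptyset$, and by construction $r \geq \alpha$ and $f_\emptyset(x) > 0$ for $x > r$.

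Now I would use a pigeonhole argument at $x = r$. From $\sum_i f_i(r) = f_\emptyset(r) = 0$, there must exist some $i$ with $f_i(r) \geq 0$; fix this $i$. The crucial observation is that the only root of $f_i$ in $[\alpha, \infty)$ is the largest root $\beta_m^{(i)}$ (since all smaller roots lie at $\beta_{m-1}^{(i)} \leq \alpha$). Therefore on $[\alpha, \infty)$ the polynomial $f_i$ is non-positive on $[\alpha, \beta_m^{(i)}]$ and positive on $(\beta_m^{(i)}, \infty)$. Since $f_i(r) \geq 0$ and $r \geq \alpha$, this sign analysis forces $r \geq \beta_m^{(i)}$, which is precisely the conclusion that the largest root of $f_i$ is at most the largest root of $f_\emptyset$.

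The only delicate point is the boundary case where $f_i(r) = 0$ for every $i$; I would handle it by noting that $r$ must then actually equal $\beta_m^{(i)}$ for some $i$ (otherwise every $f_i$ would be strictly negative on a right-neighborhood of $r$, contradicting $f_\emptyset(x) > 0$ for $x > r$). This is the main conceptual step, but it is short; the rest is bookkeeping with signs of real-rooted polynomials.
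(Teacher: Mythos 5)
The paper does not prove this lemma itself; it defers to the reference \cite{MSS12} for it, and your argument is indeed the standard one for this fact: evaluate at the largest root $\alpha$ of the common interlacer to get $f_\emptyset(\alpha)\le 0$, conclude that the largest root $r$ of $f_\emptyset$ satisfies $r\ge\alpha$, and then pigeonhole at $r$ to find an $f_i$ with $f_i(r)\ge 0$ whose top root must lie at or below $r$. The argument is correct. The one spot that deserves tightening is the sentence claiming that $f_i(r)\ge 0$ and $r\ge\alpha$ ``forces $r\ge\beta_m^{(i)}$''; as written this is only justified when $f_i(r)>0$, since if $f_i(r)=0$ then $r$ could a priori be $\alpha=\beta_{m-1}^{(i)}<\beta_m^{(i)}$. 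Your final paragraph does repair this, and it covers exactly the complementary case: if no $f_i(r)>0$ then all $f_i(r)=0$, and your right-neighborhood argument rules out $r<\beta_m^{(i)}$ for every $i$. Together the two paragraphs give a complete case split (some $f_i(r)>0$ versus all $f_i(r)=0$), so the proof is sound; it would read more cleanly if you stated that dichotomy up front rather than presenting the second case as an afterthought.
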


\begin{definition}
\label{def:interlacingfamily}
Let $\cF\subseteq 2^{[m]}$ be nonempty. For any $S\in\cF$,  let $f_S(x)$ be a real rooted polynomial of degree $d$ with a positive leading coefficient.
For $s_1,\ldots,s_k\in\{0,1\}$ with $k<m$, let
$$ \cF_{s_1,\ldots,s_k}:=\{S\in\cF: i\in S \Leftrightarrow s_i=1\}.$$
Note that $\cF=\cF_{\emptyset}$.
Define
$$f_{s_1,\ldots,s_k} = \sum_{S\in\cF_{s_1,\ldots,s_k}} f_S,$$
and
$$ f_{\emptyset} = \sum_{S\in\cF} f_S.$$
We say polynomials $\{f_S\}_{S\in\cF}$ form an {\em interlacing family} if for all $0\leq k<m$ and all
$s_1,\ldots,s_k\in\{0,1\}$ the following holds: If both of  $\cF_{s_1,\ldots,s_k,0}$ and $\cF_{s_1,\ldots,s_k,1}$ are nonempty, $f_{s_1,\ldots,s_k,0}$ and $f_{s_1,\ldots,s_k,1}$ have 
a common interlacing. 
\end{definition}

The following is analogous to \cite[Thm 3.4]{MSS13}.
\begin{theorem}
\label{thm:interlacingfamily}
Let $\cF\subseteq 2^{[m]}$ and let $\{f_S\}_{S\in\cF}$ be an interlacing family of polynomials. Then, there exists $S\in\cF$ such that the largest root of $f(S)$ is at most the largest root of $f_\emptyset$.
\end{theorem}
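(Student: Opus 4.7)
The plan is to prove this by a greedy descent through a depth-$m$ binary decision tree, at each level picking the branch whose associated summed polynomial has smaller maximum root. Concretely, I view the tuples $(s_1,\ldots,s_k)\in\{0,1\}^k$ with $\cF_{s_1,\ldots,s_k}\neq\emptyset$ as nodes of a tree: the root is $\emptyset$ (corresponding to $f_\emptyset=\sum_{S\in\cF}f_S$), and the children of $(s_1,\ldots,s_k)$ are those extensions $(s_1,\ldots,s_k,0)$ and $(s_1,\ldots,s_k,1)$ whose $\cF$-sets are nonempty. By construction $f_{s_1,\ldots,s_k}=f_{s_1,\ldots,s_k,0}+f_{s_1,\ldots,s_k,1}$ whenever both children exist, and each leaf $(s_1,\ldots,s_m)$ corresponds to a unique $S\in\cF$ with $f_{s_1,\ldots,s_m}=f_S$.

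The key descent step: at an internal node $(s_1,\ldots,s_k)$ with both children nonempty, the interlacing-family hypothesis says $f_{s_1,\ldots,s_k,0}$ and $f_{s_1,\ldots,s_k,1}$ have a common interlacing. Both polynomials have degree $d$ with positive leading coefficients (inherited from the $f_S$'s summed over the nonempty sets $\cF_{s_1,\ldots,s_k,b}$), so \autoref{lem:interlacringroot} applies and produces a bit $b\in\{0,1\}$ such that the largest root of $f_{s_1,\ldots,s_k,b}$ is at most the largest root of $f_{s_1,\ldots,s_k,0}+f_{s_1,\ldots,s_k,1}=f_{s_1,\ldots,s_k}$. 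If instead only one child is nonempty, then $f_{s_1,\ldots,s_k}$ equals that child polynomial outright, so we simply descend to it. Either way we can extend our path by one bit without increasing the largest root.

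Starting at the root and iterating this step $m$ times produces a sequence $s_1,\ldots,s_m$ with $\cF_{s_1,\ldots,s_m}\neq\emptyset$. Since the tuple $(s_1,\ldots,s_m)$ fully specifies membership in $[m]$, the set $\cF_{s_1,\ldots,s_m}$ contains a single element $S\in\cF$, and $f_{s_1,\ldots,s_m}=f_S$. Chaining the inequalities along the root-to-leaf path gives
\[
  \text{maxroot}(f_S)\;\leq\;\text{maxroot}(f_{s_1,\ldots,s_{m-1}})\;\leq\;\cdots\;\leq\;\text{maxroot}(f_\emptyset),
\]
as desired. The only place any care is needed is in the bookkeeping of ``missing'' children and in verifying that summing polynomials of degree $d$ with positive leading coefficients yields another polynomial of degree $d$ with positive leading coefficient (so that \autoref{lem:interlacringroot} remains applicable at every node); both are immediate from the definitions.
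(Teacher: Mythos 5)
Your proof is correct and is essentially the paper's own argument: a greedy descent through the depth-$m$ binary tree of nonempty $\cF_{s_1,\ldots,s_k}$'s, invoking \autoref{lem:interlacringroot} at each branch where both children are nonempty and passing through directly when one child is empty. The paper phrases this as an induction on $k$ while you phrase it as chaining inequalities along a root-to-leaf path, but these are the same proof.
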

\begin{proof}
We prove by induction. Assume that for some choice of $s_1,\ldots,s_k\in\{0,1\}$ (possibly with $k=0$), $\cF_{s_1,\ldots,s_k}$ is nonempty and the largest root of $f_{s_1,\ldots,s_k}$ is at most the largest root of $f_\emptyset$.
If  $\cF_{s_1,\ldots,s_k,0}=\emptyset$, then
$f_{s_1,\ldots,s_k}=f_{s_1,\ldots,s_k,1}$, so we let $s_{k+1}=1$ and we are done.
Similarly, if $\cF_{s_1,\ldots,s_k,1}=\emptyset$, then we let $s_{k+1}=0$ and we are done with the induction. If both of these sets are nonempty, then
 $f_{s_1,\ldots,s_k,0}$ and $f_{s_1,\ldots,s_k,1}$ have a common interlacing. So, by \autoref{lem:interlacringroot}, for some choice of $s_{k+1}\in\{0,1\}$, the largest root of
$f_{s_1,\ldots,s_{k+1}}$ is at most the largest root of $f_\emptyset$.
\end{proof}

We use the following lemma which appeared as Theorem 2.1 of \cite{Ded92} to prove that a certain family of polynomials that we construct in \autoref{sec:mixedpoly} form an interlacing family. 
\begin{lemma}
\label{lem:realrootedimplyinterlacing}
Let $f_1,\ldots,f_k$ be univariate polynomials of
the same degree with positive leading coefficients. Then, $f_1,\ldots,f_k$ have a common interlacing if and only if $\sum_{i=1}^k \lambda_i f_i$ is real rooted for all convex combinations $\lambda_i\geq 0$, $\sum_{i=1}^k \lambda_i=1$. 
\end{lemma}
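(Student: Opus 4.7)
The plan is to prove both directions, with the backward one reducing to the classical two-polynomial Hermite--Kakeya--Obreschkoff theorem. For the forward direction, suppose $g$ with roots $\alpha_1 \le \cdots \le \alpha_{m-1}$ is a common interlacer of $f_1,\dots,f_k$ (all of degree $m$). Then for each $j$, the polynomial $f_j$ has exactly one root in each of the intervals $(-\infty,\alpha_1], [\alpha_1,\alpha_2], \dots, [\alpha_{m-1},+\infty)$, so the values $f_j(\alpha_1),\dots,f_j(\alpha_{m-1})$ alternate in sign. The alternation pattern depends only on the degree and the sign of the leading coefficient, so it is identical across all $j$. Consequently any nonnegative combination $\sum_j \lambda_j f_j$ exhibits the same sign alternation at the $\alpha_i$'s; by the intermediate value theorem it has a real root in each interval $(\alpha_i,\alpha_{i+1})$, and the sign behavior at $\pm\infty$ supplies the remaining root, producing a full set of $m$ real roots.

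For the backward direction, first restrict to pairs: setting $\lambda_j = t$, $\lambda_{j'} = 1-t$, and the remaining $\lambda$'s to zero shows that $tf_j + (1-t)f_{j'}$ is real rooted for every $t\in[0,1]$. The two-polynomial case of the theorem then produces a common interlacer for each pair $(f_j, f_{j'})$. Writing the ordered roots of $f_j$ as $\beta_1^{(j)} \le \cdots \le \beta_m^{(j)}$, the existence of a pairwise common interlacer is equivalent to $\beta_i^{(j)} \le \beta_{i+1}^{(j')}$ for all $i$ and all $j,j'$ (a value $\alpha_i$ interlacing the pair exists iff $[\max(\beta_i^{(j)},\beta_i^{(j')}), \min(\beta_{i+1}^{(j)},\beta_{i+1}^{(j')})]$ is nonempty). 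Taking the max over $j$ on the left and the min over $j'$ on the right yields the global gap condition
\[
\max_j \beta_i^{(j)} \;\le\; \min_{j'} \beta_{i+1}^{(j')} \qquad \text{for every } i.
\]
Choosing any $\alpha_i$ in this nonempty interval for each $i$, the polynomial $g(x) = \prod_{i=1}^{m-1}(x - \alpha_i)$ simultaneously interlaces every $f_j$.

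The main obstacle is the two-polynomial case, which is classical but nontrivial. One concrete route uses continuity of roots: as $t$ varies in $[0,1]$, the ordered roots of $tf + (1-t)g$ deform continuously while remaining real (by hypothesis), and a bookkeeping argument then forces the $i$-th largest root at $t=1$ to not exceed the $(i+1)$-th largest at $t=0$, which is exactly the pairwise gap condition. An alternative is the Hermite--Biehler characterization, which identifies common interlacing of $f$ and $g$ with the stability of $f(x) + i\,g(x)$ in a half-plane. Either way, I would treat this pairwise result as cited analytic input; once it is in hand, the extension to $k$ polynomials is the routine combinatorial packaging carried out above.
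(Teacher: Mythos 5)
The paper itself does not prove this lemma; it explicitly cites it as Theorem~2.1 of Dedieu's paper \cite{Ded92}. So there is no in-paper proof to compare against, and your proposal should be judged on its own merits. Your structure is the standard one and it is essentially correct: the forward direction via sign alternation at the interlacer's roots plus the intermediate value theorem, and the backward direction by restricting to two-term convex combinations, invoking the Hermite--Kakeya--Obreschkoff theorem for the pairwise case, and then gluing via the gap condition $\max_j \beta_i^{(j)} \le \min_{j'} \beta_{i+1}^{(j')}$. The combinatorial gluing step is correct: the pairwise conditions do yield $\beta_i^{(j)} \le \beta_{i+1}^{(j')}$ for all $i,j,j'$, and taking $\max$ over $j$ and $\min$ over $j'$ gives a nonempty interval for each $\alpha_i$.

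Two points deserve more care, though neither is fatal. First, in the forward direction, ``alternate in sign'' should be ``weakly alternate'': interlacing permits $\beta_\ell = \alpha_i$, so $f_j(\alpha_i)$ can vanish, and $\sum_j \lambda_j f_j(\alpha_i)$ can vanish too; the IVT argument still produces the right number of real roots, but one must either count multiplicities at the $\alpha_i$ carefully or run a perturbation/limiting argument, and the degenerate case of repeated $\alpha_i$ (which forces a shared root across all $f_j$) should be factored out before applying IVT. Second, the two-polynomial case is genuinely the crux; you flag it honestly as cited input, which matches the paper's own treatment of the whole lemma as cited. Your two suggested routes to it (root continuity as $t$ varies, or the Hermite--Biehler stability characterization) are both viable, but as presented they are sketches rather than proofs. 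Overall your proposal fills in strictly more detail than the paper provides and follows the canonical decomposition of this result.
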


\subsection{Stable Polynomials}
\label{sec:realstability}
Stable polynomials are natural multivariate generalizations
of real-rooted univariate polynomials. For a complex number $z$, let
$\image(z)$ denote the imaginary part of $z$.
We say a polynomial $p(z_1,\ldots,z_m)\in\C[z_1,\ldots,z_m]$ is {\em stable}
if whenever $\image(z_i)>0$ for all $1\leq i\leq m$, $p(z_1,\ldots,z_m)\neq 0$. We say $p(.)$ is real stable, if it is stable and all of its coefficients are real. It is easy to see that any univariate polynomial is real stable  if and only if it is real rooted.

One of the most interesting classes of real stable polynomials is the class of determinant polynomials as observed by Borcea and Br\"and\'en \cite{BB08}.
\begin{theorem}
\label{thm:motherrealstability}
For any set of positive semidefinite matrices $A_1,\ldots,A_m$,
the following polynomial is real stable:
$$ \det\Big(\sum_{i=1}^m z_i A_i\Big).$$
\end{theorem}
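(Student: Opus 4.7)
The plan is to reduce to the positive definite case, handle it by a direct linear-algebra argument, and then pass back to the PSD case by a Hurwitz-type limiting argument.

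First I would dispose of the positive definite case. Suppose each $A_i$ is positive definite, and fix any point $z = (z_1, \dots, z_m)$ with $\image(z_i) > 0$ for all $i$. Write $M(z) = \sum_i z_i A_i = X + iY$, where
\[
X = \sum_i \Re(z_i) A_i, \qquad Y = \sum_i \Im(z_i) A_i.
\]
Both $X$ and $Y$ are real symmetric; moreover, $Y$ is a strictly positive combination of positive definite matrices, hence itself positive definite. The claim is that $M(z)$ is invertible. Indeed, if $M(z)v = 0$ for some $v \neq 0$, then $v^* X v + i v^* Y v = 0$. Since $X, Y$ are Hermitian, both quadratic forms are real, forcing $v^* Y v = 0$, which contradicts the positive definiteness of $Y$. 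Hence $\det(M(z)) \neq 0$ on the open upper half-polydisc, so $\det(\sum_i z_i A_i)$ is stable. Since each $A_i$ is real symmetric, the coefficients of this polynomial are real, so it is real stable.

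Next I would pass from positive definite to merely positive semidefinite by a perturbation argument. Given arbitrary PSD matrices $A_1, \dots, A_m$, set $A_i^{(\eps)} := A_i + \eps I$ for $\eps > 0$, and let
\[
p_\eps(z) := \det\Big(\sum_{i=1}^m z_i A_i^{(\eps)}\Big), \qquad p(z) := \det\Big(\sum_{i=1}^m z_i A_i\Big).
\]
Each $A_i^{(\eps)}$ is positive definite, so by the previous paragraph $p_\eps$ is real stable. The coefficients of $p_\eps$ are polynomials in $\eps$, so $p_\eps \to p$ coefficientwise (equivalently, uniformly on compact sets) as $\eps \to 0^+$. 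I would then invoke Hurwitz's theorem (in its multivariate form): a coefficientwise limit of stable polynomials is either identically zero or again stable. In either case $p$ has no zeros in the upper half-polydisc, so it is real stable.

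The only step that needs any care is the limiting argument; in particular one must be content to call the zero polynomial stable by convention, since there exist trivial choices of the $A_i$ (e.g., all zero, or all sharing a common null vector) for which $\det(\sum z_i A_i) \equiv 0$. Modulo this convention, the argument is short, and the main conceptual content is the PD invertibility lemma above.
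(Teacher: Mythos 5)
The paper does not give its own proof of this theorem --- it is quoted as a known result of Borcea and Br\"and\'en and cited as \cite{BB08}. Your argument is correct and is essentially the standard proof (it matches, for instance, the sketch in \cite{MSS13}): for positive definite $A_i$, write $\sum_i z_i A_i = X + iY$ and observe that $Y = \sum_i \image(z_i) A_i$ is positive definite whenever $\image(z_i) > 0$ for all $i$, so the matrix is invertible and the determinant cannot vanish; then pass from positive definite to positive semidefinite by replacing each $A_i$ with $A_i + \eps I$ and letting $\eps \to 0^+$ via Hurwitz. Two minor points worth keeping in mind. First, under the paper's definition the zero polynomial is \emph{not} stable, so the statement tacitly assumes $\det(\sum_i z_i A_i) \not\equiv 0$; this holds in every use the paper makes of the theorem (there the matrices always include a term $xI$, so the determinant is nonzero at, e.g., $x=1$, $z=0$). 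Second, ``multivariate Hurwitz'' is slightly less elementary than the classical single-variable version: the usual reduction is to suppose $p(z^0) = 0$ at some $z^0$ in the open upper half-polydisc with $p \not\equiv 0$, pick a direction $w$ along which $t \mapsto p(z^0 + tw)$ is not identically zero, and apply one-variable Hurwitz to the restrictions of the $p_\eps$ along that complex line. With these caveats acknowledged, your proof is complete and self-contained.
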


Perhaps the most important property of stable polynomials is that they are closed under several elementary operations like multiplication, differentiation, and substitution. We will use these operations to generate new stable polynomials from the determinant polynomial.
The following is proved in \cite{MSS13}.
\begin{lemma}
\label{lem:partialrealstability}
If $p\in\R[z_1,\ldots,z_m]$ is real stable, then so are the polynomials $(1-\partial_{z_1})p(z_1,\ldots,z_m)$ and
$(1+\partial_{z_1})p(z_1,\ldots,z_m)$.
\end{lemma}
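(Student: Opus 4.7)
The plan is to reduce the multivariate claim to a univariate Gauss--Lucas-type sign computation. Suppose, for contradiction, that $(1-\partial_{z_1})p$ vanishes at some point $z^* = (z_1^*,\ldots,z_m^*)$ with $\image(z_i^*) > 0$ for every $i$. I would freeze the last $m-1$ coordinates and consider the univariate polynomial
\[ q(z) := p(z, z_2^*, \ldots, z_m^*) \in \C[z]. \]
A preliminary observation is that $q$ is not identically zero, since otherwise $p$ would vanish at $(z, z_2^*, \ldots, z_m^*)$ for every $z$, in particular for $z$ in the open upper half-plane, contradicting real stability of $p$.

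Next I would observe that $q$ has no zeros in the open upper half-plane, again directly from real stability of $p$. Factor $q$ over $\C$ as $q(z) = c\prod_i (z-r_i)^{m_i}$, so that every root satisfies $\image(r_i) \le 0$. The logarithmic derivative is
\[ \frac{q'(z)}{q(z)} = \sum_i \frac{m_i}{z-r_i}. \]
For any $z$ with $\image(z) > 0$ and any root $r_i$ with $\image(r_i) \le 0$, a direct computation shows $\image\bigl(m_i/(z-r_i)\bigr) = -m_i\bigl(\image(z)-\image(r_i)\bigr)/|z-r_i|^2 < 0$. Summing over $i$, $\image\bigl(q'(z)/q(z)\bigr) < 0$ throughout the open upper half-plane. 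In particular, $q'(z_1^*)/q(z_1^*)$ has strictly negative imaginary part, so it cannot equal the real number $1$. Since $q(z_1^*) \ne 0$ (because $z_1^*$ lies in the open upper half-plane), this contradicts $q(z_1^*) - q'(z_1^*) = (1-\partial_{z_1})p(z^*) = 0$.

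The claim for $(1+\partial_{z_1})p$ follows by the identical argument, replacing $1$ with $-1$: again the real number $-1$ cannot equal a quantity with strictly negative imaginary part. The only points that demand a little care are the two preliminary facts about $q$ (nonvanishing identically, and having no roots in the open upper half-plane), both of which follow immediately by restricting the definition of real stability of $p$ to the slice $z_2 = z_2^*, \ldots, z_m = z_m^*$. Once these are in place the proof is just a sign check, and I do not anticipate any serious obstacle.
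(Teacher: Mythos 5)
Your argument is correct in substance and gives a clean, self-contained proof of the lemma. The paper itself does not prove this lemma in-text; it cites \cite{MSS13}, where the result is obtained as a corollary of the Lieb--Sokal lemma (equivalently, the Borcea--Br\"and\'en theory of stability preservers). Your route is genuinely different and more elementary: you reduce to a univariate slice and run a classical logarithmic-derivative sign computation, showing $\image\bigl(q'/q\bigr) \le 0$ in the open upper half-plane and hence that $q'/q$ can never equal $\pm 1$ there. This avoids any machinery beyond the definition of stability, at the cost of being specific to the operator $1 \pm \partial_{z_1}$; the stability-preserver route buys uniformity over a much larger class of differential operators, which \cite{MSS13} need elsewhere, but for this particular lemma your argument is leaner.

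One small point to tighten: the step ``Summing over $i$, $\image\bigl(q'(z)/q(z)\bigr) < 0$'' is only valid when $q$ has positive degree. If $q$ is a nonzero constant (which happens whenever $z_1$ does not appear in $p$ after the substitution), the sum over roots is empty and $q'/q \equiv 0$, whose imaginary part is $0$, not strictly negative. The conclusion you actually need, namely $q'(z_1^*)/q(z_1^*) \neq \pm 1$, still holds trivially in this case since $0 \neq \pm 1$, so nothing breaks; but you should either weaken the inequality to $\le 0$ and finish by observing $\pm 1$ has imaginary part zero and is nonzero, or dispose of the constant case up front. With that adjustment the proof is complete.
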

The following corollary simply follows from the above lemma.
\begin{corollary}
\label{cor:partial2realstability}
If $p\in\R[z_1,\ldots,z_m]$ is real stable, then so is 
$$(1-\partial^2_{z_1})p(z_1,\ldots,z_m).$$
\end{corollary}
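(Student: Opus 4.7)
The plan is to prove this as a direct two-line consequence of \autoref{lem:partialrealstability} by exploiting the operator factorization $1-\partial_{z_1}^2 = (1-\partial_{z_1})(1+\partial_{z_1})$. Since $\partial_{z_1}$ commutes with itself and with scalar multiplication, expanding this product gives $1 + \partial_{z_1} - \partial_{z_1} - \partial_{z_1}^2 = 1-\partial_{z_1}^2$, so the two differential operators $(1-\partial_{z_1})$ and $(1+\partial_{z_1})$ may be composed in either order and the result coincides with the single operator appearing in the statement.

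First I would define $q(z_1,\ldots,z_m) := (1+\partial_{z_1})p(z_1,\ldots,z_m)$. By \autoref{lem:partialrealstability}, $q$ is real stable because $p$ is. Then I would apply the other half of \autoref{lem:partialrealstability} to $q$ to conclude that $(1-\partial_{z_1})q$ is real stable as well. By the operator identity noted above, this polynomial equals $(1-\partial_{z_1}^2)p$, which completes the argument.

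I do not expect any obstacle here; the only thing worth verifying carefully is the commutation of the two operators, which is immediate because all coefficients in the operators are constants and partial differentiation in a single variable commutes with itself. No new stability-preservation fact is needed beyond \autoref{lem:partialrealstability}.
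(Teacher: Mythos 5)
Your proof is correct and takes the same route as the paper: factor $1-\partial_{z_1}^2=(1-\partial_{z_1})(1+\partial_{z_1})$ and apply \autoref{lem:partialrealstability} twice. The only difference is that you spell out the intermediate polynomial and the commutation check explicitly, which the paper leaves implicit.
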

\begin{proof}
First, observe that
$$(1-\partial^2_{z_1})p(z_1,\ldots,z_m) = (1-\partial_{z_1})(1+\partial_{z_1})p(z_1,\ldots,z_m). $$
So, the conclusion follows from two applications of \autoref{lem:partialrealstability}.
\end{proof}

The following closure properties are elementary.
\begin{lemma}
\label{lem:stabilitymultiplication}
If $p\in\R[z_1,\ldots,z_m]$ is real stable, then so is $p(\lambda\cdot z_1,\ldots,\lambda_m\cdot z_m)$ for  real-valued $\lambda_1,\ldots,\lambda_m > 0$. 
\end{lemma}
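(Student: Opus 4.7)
The plan is to verify the defining conditions of real stability directly for the polynomial $q(z_1,\ldots,z_m) := p(\lambda_1 z_1,\ldots,\lambda_m z_m)$. Two things must be checked: that $q$ has real coefficients, and that $q$ has no zeros in the open upper half-plane product.

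First I would note that each coefficient of $q$ is obtained from the corresponding coefficient of $p$ by multiplication with a monomial in $\lambda_1,\ldots,\lambda_m$. Since all $\lambda_i$ are real and the coefficients of $p$ are real by hypothesis, the coefficients of $q$ are real.

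Next, suppose $z_1,\ldots,z_m \in \C$ satisfy $\image(z_i) > 0$ for every $i$. Since each $\lambda_i > 0$ is a positive real number, $\image(\lambda_i z_i) = \lambda_i \image(z_i) > 0$ as well. Setting $w_i := \lambda_i z_i$, we have $w_i$ in the open upper half-plane for every $i$, so the real stability of $p$ gives $p(w_1,\ldots,w_m) \neq 0$, i.e.\ $q(z_1,\ldots,z_m) \neq 0$. Combining this with the first observation, $q$ is real stable.

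There is no real obstacle here; the statement is a direct unpacking of the definition of real stability together with the elementary fact that scaling by a positive real preserves the sign of the imaginary part. The one thing to be careful about is the (apparent) typo in the statement where $\lambda$ appears in place of $\lambda_1$; my proof treats all subscripts uniformly as $\lambda_i$, which is the intended reading.
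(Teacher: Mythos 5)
Your proof is correct and follows essentially the same reasoning as the paper's: that multiplication by a positive real preserves the sign of the imaginary part, so the zero-free region of $q$ in the product of upper half-planes follows directly from that of $p$. The paper phrases it as the contrapositive (starting from a root of $q$ and deducing a non-positive imaginary part), but the content is identical; your additional remark that the coefficients of $q$ remain real is a small point the paper leaves implicit.
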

\begin{proof}
Say $(z_1,\ldots,z_m)\in\C^m$ is a root of $p(\lambda\cdot z_1,\ldots,\lambda_m\cdot z_m)$. Then $(\lambda_1\cdot z_1,\ldots,\lambda_m\cdot z_m)$ is a root of $p(z_1,\ldots,z_m)$.
Since $p$ is real stable, there is an $i$ such that $\image(\lambda_i\cdot z_i)\leq 0$. But, since $\lambda_i>0$, we get $\image(z_i)\leq 0$, as desired.
\end{proof}
\begin{lemma}
\label{lem:stabilityaddition}
If $p\in\R[z_1,\ldots,z_m]$ is real stable, then so is
$p(z_1+x,\ldots,z_m+x)$ for a new variable $x$.
\end{lemma}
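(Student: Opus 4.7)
The plan is to prove this by direct contradiction from the definition of real stability, rather than invoking any closure properties stated earlier. First I would set $q(z_1,\ldots,z_m,x) := p(z_1+x,\ldots,z_m+x)$ and observe that substituting polynomials with real coefficients (here the linear polynomials $z_i + x$) into a polynomial with real coefficients yields a polynomial with real coefficients, so $q \in \R[z_1,\ldots,z_m,x]$. All that remains is the stability (non-vanishing) condition.

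For stability, I would suppose toward contradiction that there exists a point $(z_1^*,\ldots,z_m^*,x^*) \in \C^{m+1}$ with $\image(z_i^*) > 0$ for every $i$ and $\image(x^*) > 0$ at which $q$ vanishes. Setting $w_i := z_i^* + x^*$, the equation $q(z_1^*,\ldots,z_m^*,x^*) = 0$ unfolds to $p(w_1,\ldots,w_m) = 0$. But $\image(w_i) = \image(z_i^*) + \image(x^*) > 0$ for each $i$, since the sum of two strictly positive real numbers is strictly positive. This contradicts the real stability of $p$, so no such point exists and $q$ is real stable.

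I do not anticipate any obstacle here: the entire argument rests on the single observation that addition of complex numbers is additive on imaginary parts, so the affine map $(z,x) \mapsto z + x\bone$ sends the open product of upper half-planes into the open upper half-plane componentwise. This is what makes the new variable $x$ harmless. The proof is therefore essentially a one-line verification once $q$ is named.
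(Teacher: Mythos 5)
Your proof is correct and follows essentially the same route as the paper's: both arguments reduce stability of $q(z,x) = p(z_1+x,\ldots,z_m+x)$ to stability of $p$ by noting that imaginary parts add, so a root of $q$ with every coordinate in the open upper half-plane would yield a root of $p$ with every coordinate in the open upper half-plane (you phrase this as a contradiction, the paper as a contrapositive, but the content is identical).
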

\begin{proof}
Say $(z_1,\ldots,z_m,x)\in\C^m$ is a root of $p(z_1+x,\ldots,z_m+x)$. Then $(z_1+x,\ldots,z_m+x)$ is a root of $p(z_1,\ldots,z_m)$. 
Since $p$ is real stable, there is an $i$ such that $\image(z_i+x)\leq 0$. But, then either $\image(x)\leq 0$ or $\image(z_i)\leq 0$, as desired. 
\end{proof}

\subsection{Facts from Linear Algebra}
For a Hermitian matrix $M\in\C^{d\times d}$,
we write the characteristic polynomial of $M$ in terms of a variable $x$ as
$$ \X[M](x)=\det(xI-M).$$
We also write the characteristic polynomial in terms of the square of $x$ as $$\X[M](x^2)=\det(x^2I-M).$$

For $1\leq k\leq n$, we write $\sigma_k(M)$
to denote the sum of all principal $k\times k$ minors of $M$, in particular,
$$ \X[M](x) = \sum_{k=0}^d x^{d-k} (-1)^k\sigma_k(M).$$

The following lemma follows from the Cauchy-Binet identity. See \cite{MSS13} for the proof.
\begin{lemma}
\label{lem:cauchbinet}
For vectors $v_1,\ldots,v_m\in\R^d$ and scalars $z_1,\ldots,z_m$,
$$ \det\left(xI + \sum_{i=1}^m z_iv_iv_i^\intercal\right) = \sum_{k=0}^d x^{d-k}  \sum_{S\subseteq {[m]\choose k}} z^S \sigma_k\big(\sum_{i\in S} v_iv_i^\intercal\big).$$
In particular, for $z_1=\ldots=z_m=-1$,
$$ \det\left(xI - \sum_{i=1}^m v_iv_i^\intercal\right) = \sum_{k=0}^d x^{d-k} (-1)^k \sum_{S\subseteq {[m]\choose k}} \sigma_k\big(\sum_{i\in S} v_iv_i^\intercal\big).$$
\end{lemma}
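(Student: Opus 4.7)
} Let $V=[v_1\,|\,\cdots\,|\,v_m]$ be the $d\times m$ matrix with columns $v_i$, and let $Z=\diag(z_1,\ldots,z_m)$ so that $\sum_{i=1}^m z_i v_iv_i^\intercal = VZV^\intercal$. The plan is to expand $\det(xI+VZV^\intercal)$ as a polynomial in $x$ using elementary symmetric functions, then apply Cauchy--Binet twice to extract the coefficients as a sum over $S\in\binom{[m]}{k}$. Specializing $z_1=\cdots=z_m=-1$ will then give the second identity immediately.

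First I would write
\[
\det(xI+VZV^\intercal) = \sum_{k=0}^d x^{d-k}\,\sigma_k(VZV^\intercal),
\]
using the standard identity relating a matrix's characteristic polynomial to its principal minor sums (noting that $VZV^\intercal$ is a $d\times d$ matrix even though $Z$ is indefinite). Next, for any $T\in\binom{[d]}{k}$, the $T$-principal submatrix of $VZV^\intercal$ equals $V_T Z V_T^\intercal$, where $V_T$ is the $k\times m$ submatrix of $V$ formed by the rows indexed by $T$. By the Cauchy--Binet formula applied to the product $V_T\cdot(ZV_T^\intercal)$, and since $Z$ is diagonal,
\[
\det(V_T Z V_T^\intercal) \;=\; \sum_{S\in\binom{[m]}{k}} z^S\,\det(V_{T,S})^2,
\]
where $V_{T,S}$ is the $k\times k$ submatrix with rows in $T$ and columns in $S$.

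Summing over $T\in\binom{[d]}{k}$ and swapping the order of summation then gives
\[
\sigma_k(VZV^\intercal) \;=\; \sum_{S\in\binom{[m]}{k}} z^S \sum_{T\in\binom{[d]}{k}} \det(V_{T,S})^2.
\]
A second application of Cauchy--Binet (this time to $(V_S)(V_S)^\intercal$ where $V_S$ is $d\times k$) shows that for each fixed $S$,
\[
\sum_{T\in\binom{[d]}{k}} \det(V_{T,S})^2 \;=\; \sigma_k\bigl(V_S V_S^\intercal\bigr) \;=\; \sigma_k\Bigl(\sum_{i\in S} v_i v_i^\intercal\Bigr),
\]
since $V_S V_S^\intercal = \sum_{i\in S} v_i v_i^\intercal$. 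Plugging back in proves the first identity. The second identity then follows by substituting $z_i=-1$ for all $i$, which turns $z^S$ into $(-1)^{|S|}=(-1)^k$.

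The argument is essentially routine bookkeeping; the only point that requires a little care is keeping the two uses of Cauchy--Binet straight and correctly handling the diagonal matrix $Z$ in the first application so that the scalar $z^S=\prod_{i\in S}z_i$ factors out of the determinant. No new ideas are needed beyond Cauchy--Binet and the elementary-symmetric-function expansion of the characteristic polynomial.
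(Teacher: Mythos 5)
Your proof is correct and follows the route the paper itself points to: the paper does not reproduce a proof of this lemma, but states that it ``follows from the Cauchy--Binet identity'' and defers to \cite{MSS13}, and your two-stage Cauchy--Binet computation is exactly the expected argument. One small remark on presentation: your ``second application of Cauchy--Binet'' to $V_S V_S^\intercal$ (with $V_S$ being $d\times k$) degenerates, since after expanding $\sigma_k(V_S V_S^\intercal)$ as a sum of principal $k\times k$ minors you have $(V_S V_S^\intercal)_{T,T}=V_{T,S}V_{T,S}^\intercal$ with $V_{T,S}$ already square, so the identity $\det(V_{T,S}V_{T,S}^\intercal)=\det(V_{T,S})^2$ is plain multiplicativity of the determinant rather than Cauchy--Binet proper. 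This is a matter of emphasis only; the argument is sound as written.
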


The following is Jacboi's formula for the derivative of the determinant of a matrix.
\begin{theorem}
\label{thm:jacobi}
For an invertible matrix $A$ which is a differentiable function of $t$,
$$ \partial_t \det(A) = \det(A)\cdot \trace(A^{-1} \partial_t A).$$
\end{theorem}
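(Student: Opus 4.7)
The plan is to prove Jacobi's formula from first principles, by factoring out $A(t)$ from the displaced matrix $A(t+h)$ and expanding $\det(I+hB)$ to first order. The only ingredient beyond routine calculus is the Leibniz formula for the determinant.

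First, I would write, for small $h$,
$$ A(t+h) = A(t) + h\, \partial_t A + o(h) = A(t)\bigl(I + h\, A^{-1}\partial_t A + o(h)\bigr), $$
where factoring out $A(t)$ is legitimate because $A$ is assumed invertible. Taking determinants and using multiplicativity,
$$ \det(A(t+h)) = \det(A(t)) \cdot \det\bigl(I + h\, A^{-1}\partial_t A + o(h)\bigr). $$

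Next, the key lemma I need is that for any $d\times d$ matrix $B$, $\det(I+hB) = 1 + h\,\trace(B) + O(h^2)$ as $h\to 0$. I would prove this directly from the Leibniz expansion
$$ \det(I+hB) = \sum_{\sigma \in S_d} \sgn(\sigma) \prod_{i=1}^d \bigl(\delta_{i,\sigma(i)} + h B_{i,\sigma(i)}\bigr). $$
The identity permutation contributes $\prod_i (1 + h B_{ii}) = 1 + h\,\trace(B) + O(h^2)$. For any non-identity permutation $\sigma$, at least two indices $i$ satisfy $\sigma(i)\neq i$, so the corresponding product contains at least two factors of the form $h B_{i,\sigma(i)}$, each of order $O(h)$, making that term $O(h^2)$.

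Combining these two ingredients with $B = A^{-1}\partial_t A$, I obtain
$$ \det(A(t+h)) = \det(A(t)) \bigl(1 + h\,\trace(A^{-1}\partial_t A) + o(h)\bigr). $$
Subtracting $\det(A(t))$, dividing by $h$, and letting $h\to 0$ yields the claim
$$ \partial_t \det(A) = \det(A)\cdot \trace(A^{-1}\partial_t A). $$
There is no serious obstacle here: the only nontrivial step is the first-order expansion of $\det(I+hB)$, which is an immediate consequence of the Leibniz formula together with the observation that every non-identity permutation deviates from the identity in at least two positions.
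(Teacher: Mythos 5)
Your proof is correct and complete. The paper merely states Jacobi's formula as a standard fact from linear algebra and does not provide a proof, so there is no in-paper argument to compare against; your derivation (factor out $A(t)$, apply multiplicativity, expand $\det(I+hB)$ to first order via the Leibniz sum, noting that non-identity permutations contribute $O(h^2)$) is the standard route and is sound. One minor point you elide: after factoring you invoke the expansion for $\det(I + hB + o(h))$ rather than $\det(I+hB)$; this is harmless since the determinant is a polynomial in the matrix entries and hence locally Lipschitz, so the $o(h)$ perturbation of the argument contributes only $o(h)$ to the determinant, but it is worth a half-sentence to make the substitution airtight.
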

\begin{lemma}
\label{lem:inverse}
For an invertible matrix $A$ which is a differentiable function of $t$,
$$ \frac{\partial A^{-1}}{\partial t} = -A^{-1} (\partial_t A) A^{-1}.$$
\end{lemma}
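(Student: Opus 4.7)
The plan is to derive this identity from the defining relation $A \cdot A^{-1} = I$ by implicit differentiation, which is the standard textbook argument. Since the identity matrix is constant in $t$, differentiating both sides with respect to $t$ and applying the product rule to the left-hand side gives
\[
(\partial_t A) \cdot A^{-1} + A \cdot (\partial_t A^{-1}) = 0.
\]
Solving this linear equation for $\partial_t A^{-1}$ by left-multiplying by $A^{-1}$ isolates the desired quantity on the left and yields $-A^{-1}(\partial_t A) A^{-1}$ on the right, which is exactly the claimed formula.

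The only subtlety worth addressing is the justification of the product rule for matrix-valued differentiable functions, but this follows entry-by-entry from the ordinary product rule in calculus since each entry of $A \cdot A^{-1}$ is a finite sum of products of entries of $A$ and $A^{-1}$. Also one should note that $A^{-1}$ is itself differentiable in $t$ wherever $A$ is invertible and differentiable, which follows from Cramer's rule: the entries of $A^{-1}$ are rational functions of the entries of $A$ with nonvanishing denominator $\det(A)$, and compositions of differentiable functions are differentiable. No substantial obstacle arises; the entire argument is a two-line manipulation, and there is no need for the heavier stable-polynomial or interlacing machinery developed earlier in the paper.
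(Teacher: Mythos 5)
Your proposal is correct and takes the same route as the paper: differentiate the identity expressing that $A$ and $A^{-1}$ are mutual inverses (the paper uses $A^{-1}A = I$ where you use $AA^{-1}=I$, an immaterial difference) and solve for $\partial_t A^{-1}$. The extra remarks on the product rule and on differentiability of $A^{-1}$ via Cramer's rule are fine but not needed beyond what the paper states.
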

\begin{proof}
Differentiating both sides of the identity $A^{-1} A = I$ with respect to $t$, we get
$$ A^{-1} \frac{\partial A}{\partial t} + \frac{\partial A^{-1}}{\partial t} A = 0. $$
Rearranging the terms and multiplying with $A^{-1}$ gives the lemma's conclusion.
\end{proof}

The following two standard facts about trace will be used throughout the paper.
First, for $A\in\R^{k\times n}$ and $B\in\R^{n\times k}$,
$$ \trace(AB)=\trace(BA).$$
Secondly, for positive semidefinite matrices $A,B$ of the same dimension,
$$ \trace(AB)\geq 0.$$
Also, we use the fact that for any positive semidefinite matrix $A$ and any Hermitian matrix $B$,
$BAB$ is positive semidefinite. 
\begin{lemma}
\label{lem:matrixinvineq}
If $A,B\in\R^{n\times n}$ are PD matrices and  $A\preceq B$, then $B^{-1} \preceq A^{-1}.$
\end{lemma}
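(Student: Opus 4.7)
My plan is to reduce the claim to the much simpler special case in which one of the matrices is the identity, using the standard trick of conjugating by $A^{-1/2}$. Since $A$ is PD, the principal square root $A^{1/2}$ exists and is PD, hence invertible, so $A^{-1/2}$ is well defined and PD. Conjugation by a PD matrix preserves the Loewner order: if $X \preceq Y$ and $C$ is any real matrix, then $C^\intercal X C \preceq C^\intercal Y C$.

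First I would apply this to the hypothesis $A \preceq B$ with $C = A^{-1/2}$, yielding
$$ I = A^{-1/2} A\, A^{-1/2} \preceq A^{-1/2} B\, A^{-1/2}. $$
Set $M := A^{-1/2} B\, A^{-1/2}$; note $M$ is PD (it is a conjugation of the PD matrix $B$ by the invertible matrix $A^{-1/2}$), and $M \succeq I$.

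The key step is then the special case $I \preceq M \Rightarrow M^{-1} \preceq I$ for PD $M$. This is immediate from the spectral theorem: diagonalize $M = U \Lambda U^\intercal$ with $\Lambda = \diag(\lambda_1,\dots,\lambda_n)$ and $\lambda_i \geq 1$; then $M^{-1} = U \Lambda^{-1} U^\intercal$ has eigenvalues $1/\lambda_i \leq 1$, so $M^{-1} \preceq I$. Since $M^{-1} = A^{1/2} B^{-1} A^{1/2}$, this gives $A^{1/2} B^{-1} A^{1/2} \preceq I$.

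Finally I would conjugate once more by $A^{-1/2}$ to undo the first conjugation:
$$ B^{-1} = A^{-1/2}\bigl(A^{1/2} B^{-1} A^{1/2}\bigr) A^{-1/2} \preceq A^{-1/2} I\, A^{-1/2} = A^{-1}, $$
which is exactly the desired conclusion. There is no real obstacle here; the only thing to be careful about is ensuring all square roots and inverses are legitimate, which is guaranteed by positive definiteness (not merely positive semidefiniteness) of $A$ and $B$.
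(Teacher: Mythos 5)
Your proof is correct and uses essentially the same approach as the paper: conjugate by an inverse square root to reduce to the trivial case where one side is the identity, invert, and conjugate back. The only cosmetic difference is that you normalize $A$ to $I$ by conjugating with $A^{-1/2}$, whereas the paper normalizes $B$ to $I$ by conjugating with $B^{-1/2}$; the two are symmetric and interchangeable.
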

\begin{proof}
Since $A\preceq B$,
$$ B^{-1/2} A B^{-1/2} \preceq B^{-1/2} B B^{-1/2} = I.$$
So, 
$$ B^{1/2} A^{-1} B^{1/2} = (B^{-1/2} A B^{-1/2})^{-1} \succeq I$$
Multiplying both sides of the above by $B^{-1/2}$, we get
$$ A^{-1} = B^{-1/2} B^{1/2} A^{-1} B^{1/2} B^{-1/2} \succeq B^{-1/2} I B^{-1/2} = B^{-1}.$$
\end{proof}

\section{The Mixed Characteristic Polynomial}
\label{sec:mixedpoly}
For a probability distribution $\mu$, let $d_\mu$ be the degree of the polynomial $g_\mu$.
\begin{theorem}
For $v_1,\ldots,v_m\in\R^d$ and a homogeneous probability distribution $\mu:[m]\to\R_+$,
\begin{equation}
\label{eq:mixedcharpoly}
x^{d_\mu-d}\uE{S\sim \mu}\X\left[\sum_{i\in S} 2v_iv_i^\intercal\right] (x^2) = \prod_{i=1}^m \left(1-\partial^2_{z_i}\right) \left(g_{\mu}(x\bone +z) \cdot \det\left(xI+\sum_{i=1}^m z_i v_iv_i^\intercal \right)\right) \Bigg|_{z_1=\ldots=z_m=0}.
\end{equation}
\end{theorem}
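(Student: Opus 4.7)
The plan is to expand both sides and match them term by term. Both sides will turn out to be sums indexed by subsets $T \subseteq [m]$, weighted by $(-2)^{|T|}$, of quantities of the form $\sigma_{|T|}(\sum_{i\in T} v_iv_i^\intercal) \cdot \P{T\subseteq S}$ times appropriate powers of $x$.

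\textbf{Unpacking the right-hand side.} The key observation is that both factors on the RHS are multi-affine in $z_1,\dots,z_m$. Indeed, $g_\mu$ is multi-affine (as a generating polynomial of a set function), so the substitution $z_i \mapsto z_i + x$ keeps it linear in each $z_i$; and $\det(xI + \sum_i z_i v_iv_i^\intercal)$ is multi-affine by \autoref{lem:cauchbinet} since each $v_iv_i^\intercal$ is rank one. Therefore the product $F(z):=g_\mu(x\bone+z)\cdot \det(xI+\sum_i z_iv_iv_i^\intercal)$ has degree at most $2$ in each $z_i$. For a monomial $z^\alpha$ in $F$, applying $\prod_i(1-\partial^2_{z_i})$ and then evaluating at $z=0$ kills it unless $\alpha_i\in\{0,2\}$ for every $i$, and on those surviving monomials it yields the factor $(-2)^{|\{i:\alpha_i=2\}|}$. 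Since both factors are multi-affine, the coefficient of $\prod_{i\in T} z_i^2$ in $F$ is exactly the product $[z^{\bone_T}]g_\mu(x\bone+z) \cdot [z^{\bone_T}]\det(xI+\sum_i z_iv_iv_i^\intercal)$. Thus
$$ \prod_{i=1}^m(1-\partial^2_{z_i})F(z)\Big|_{z=0} \;=\; \sum_{T\subseteq[m]}(-2)^{|T|}\,[z^{\bone_T}]g_\mu(x\bone+z)\cdot [z^{\bone_T}]\det\Big(xI+\sum_i z_iv_iv_i^\intercal\Big). $$

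\textbf{Computing the two coefficients.} Writing $g_\mu(x\bone+z)=\sum_S\mu(S)\prod_{i\in S}(x+z_i)$ and using homogeneity ($|S|=d_\mu$), one reads off $[z^{\bone_T}]g_\mu(x\bone+z)=x^{d_\mu-|T|}\,\P{T\subseteq S}$. For the determinant, \autoref{lem:cauchbinet} directly gives $[z^{\bone_T}]\det(xI+\sum_i z_iv_iv_i^\intercal)=x^{d-|T|}\,\sigma_{|T|}(\sum_{i\in T}v_iv_i^\intercal)$. Substituting, the RHS becomes
$$ x^{d_\mu-d}\sum_{T\subseteq[m]}(-2)^{|T|}\,x^{2(d-|T|)}\,\PP{S\sim\mu}{T\subseteq S}\,\sigma_{|T|}\Big(\sum_{i\in T}v_iv_i^\intercal\Big). $$

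\textbf{Unpacking the left-hand side.} Expand $\chi[2M_S](x^2)=\det(x^2I-2M_S)=\sum_k (-2)^k(x^2)^{d-k}\sigma_k(M_S)$ where $M_S=\sum_{i\in S}v_iv_i^\intercal$. Applying \autoref{lem:cauchbinet} to $M_S$ itself (with $z_i=-1$ for $i\in S$) yields $\sigma_k(M_S)=\sum_{T\subseteq S,\,|T|=k}\sigma_k(\sum_{i\in T}v_iv_i^\intercal)$. Taking the expectation over $S\sim\mu$ and swapping the order of summation over $S$ and $T$, the marginal $\sum_{S\supseteq T}\mu(S)=\P{T\subseteq S}$ appears, giving exactly the same sum obtained for the RHS after multiplication by $x^{d_\mu-d}$.

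\textbf{Anticipated difficulty.} There is no deep obstruction; the whole proof is a bookkeeping exercise in which one has to be careful about (i) the multi-affineness of both factors on the RHS (without it the operator $1-\partial^2_{z_i}$ would generate spurious cross terms), and (ii) correctly identifying $\sigma_k(M_S)$ with the sum of $\sigma_k(M_T)$ over $k$-subsets $T\subseteq S$, which is the bridge that connects the $\mathbb E_{S\sim\mu}$ on the LHS to the $\sum_T \P{T\subseteq S}$ on the RHS. Both are immediate consequences of the Cauchy--Binet identity already recorded in \autoref{lem:cauchbinet}, so once they are in place the two expressions are term-by-term identical.
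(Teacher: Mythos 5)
Your proposal is correct and follows essentially the same route as the paper's proof: both rely on the multi-affineness of the two factors to decouple the coefficient of $\prod_{i\in T}z_i^2$, use homogeneity of $g_\mu$ to evaluate $[z^{\bone_T}]g_\mu(x\bone+z)$, and invoke the Cauchy--Binet identity (\autoref{lem:cauchbinet}) both to identify $[z^{\bone_T}]\det(xI+\sum_i z_iv_iv_i^\intercal)$ and to expand the left-hand side. The only cosmetic difference is that you phrase everything in terms of coefficient extraction with a consolidated $(-2)^{|T|}$ factor, while the paper tracks the $(-1)^k$ from the operator expansion and the $2^k$ from $\partial_{z_i}^2(z_i^2)$ separately.
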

We call the polynomial $\iE{S\sim\mu}\X[\sum_{i\in S} 2v_iv_i^\intercal](x^2)$ the {\em mixed characteristic polynomial} and we denote it by $\mu[v_1,\ldots,v_m](x)$.
\begin{proof}
For $S\subseteq [m]$, let $z^{2S}=\prod_{i\in S} z_i^2.$
By \autoref{lem:cauchbinet}, the coefficient of $z^{2S}$ in 
$$g_\mu(x\bone +z)\cdot \det(xI+\sum_{i=1}^m z_iv_iv_i^\intercal)$$ 
is equal to
$$\left(\prod_{i\in S} \partial^2_{z_i}\right) \left(g_{\mu}(x\bone +z)\cdot \det\left(xI+\sum_{i=1}^m z_i v_iv_i^\intercal\right)\right)\Bigg|_{z_1=\ldots=z_m=0}.
$$
Each of the two polynomials $g_\mu(x\bone +z)$ and $\det(xI+\sum_{i=1}^m z_i v_iv_i^\intercal)$ is multi-linear in $z_1,\ldots,z_m$. Therefore, for $k=|S|$, the above is equal to 
\begin{equation}
\label{eq:firstderivmixedpoly}
 2^k\cdot \left(\prod_{i\in S} \partial_{z_i}\right) g_\mu(x\bone +z) \Bigg|_{z_1=\ldots=z_m=0} \cdot \left(\prod_{i\in S} \partial_{z_i}\right) \det\left(xI+\sum_{i=1}^m z_i v_iv_i^\intercal\right)\Bigg|_{z_1=\ldots=z_m=0}. 
 \end{equation}
Since  $g_\mu$ is a homogeneous  polynomial of degree $d_\mu$, the first term in the above is equal to
$$ 
x^{d_\mu-k} \PP{T\sim\mu}{S\subseteq T}.$$
And, by \autoref{lem:cauchbinet},
the second term of \eqref{eq:firstderivmixedpoly} is equal to
$$ 
x^{d-k} \sigma_k\left(\sum_{i\in S} z_iv_iv_i^\intercal\right). $$
Applying the above identities for all $S\subseteq [m]$,
\begin{align*}
\prod_{i=1}^m \left(1-\partial^2_{z_i}\right) \Bigg(g_{\mu}(x\bone +z) &\cdot \det\left(xI+\sum_{i=1}^m z_i v_iv_i^\intercal \right) \Bigg)\Bigg|_{z_1=\ldots=z_m=0}\\
&=\sum_{k=0}^m (-1)^k \sum_{S\subseteq {[m]\choose k}} \left(\prod_{i\in S} \partial^2_{z_i}\right) \left(g_\mu(x\bone +z)\cdot\det\left(xI+\sum_{i=1}^m z_i v_iv_i^\intercal\right)\right)\Bigg|_{z_1=\ldots=z_m=0}\\
&=\sum_{k=0}^d (-1)^k 2^k x^{d_\mu+d-2k}\sum_{S\in {[m]\choose k}} \PP{T\sim\mu}{S\subseteq T}\cdot\sigma_k\left(\sum_{i\in S} v_iv_i^\intercal\right)\\
&=x^{d_\mu-d} \uE{S\sim\mu}\X\left[\sum_{i\in S}2v_iv_i^\intercal\right](x^2).
\end{align*}
The last identity uses \autoref{lem:cauchbinet}.
\end{proof}

\begin{corollary}
\label{cor:mixedrealstability}
If $\mu$ is a strongly Rayleigh probability distribution, then the mixed characteristic polynomial is real-rooted.
\end{corollary}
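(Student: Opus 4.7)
My plan is to establish real-rootedness by arguing directly on the right-hand side of identity \eqref{eq:mixedcharpoly}: if that multivariate expression, once specialized at $z_1=\cdots=z_m=0$, is real stable as a univariate polynomial in $x$, then it is real rooted, and the left-hand side $x^{d_\mu-d}\mu[v_1,\ldots,v_m](x)$ inherits this. Dividing through by the monomial $x^{d_\mu-d}$ then yields the claim (when $d_\mu<d$, the mixed characteristic polynomial already carries a compensating factor $x^{2(d-d_\mu)}$, since each matrix $\sum_{i\in S}2v_iv_i^\intercal$ has rank at most $d_\mu$, so the product on the left is an honest polynomial in $x$).

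First I would verify that the seed polynomial
$$P(x,z_1,\ldots,z_m) := g_\mu(x\bone + z)\cdot\det\Big(xI + \sum_{i=1}^m z_i v_iv_i^\intercal\Big)$$
is real stable in all $m+1$ variables. The second factor is real stable by \autoref{thm:motherrealstability} applied to the PSD matrices $I,v_1v_1^\intercal,\ldots,v_mv_m^\intercal$, treating $x$ as an additional variable paired with the PSD matrix $I$. For the first factor, the strongly Rayleigh hypothesis says $g_\mu(z_1,\ldots,z_m)$ is real stable, and \autoref{lem:stabilityaddition} then promotes this to real stability of $g_\mu(z_1+x,\ldots,z_m+x)$ in the extended variable set. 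The product of two real stable polynomials is real stable (the product of the upper half-planes avoids the zero set of each factor, hence also of their product), so $P$ is real stable.

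Next I would apply the operators $(1-\partial^2_{z_i})$ one at a time. By \autoref{cor:partial2realstability}, each operator preserves real stability, so $\prod_{i=1}^m(1-\partial^2_{z_i})\,P$ remains real stable in $(x,z_1,\ldots,z_m)$. Finally I would specialize the variables $z_1=\cdots=z_m=0$ one at a time, invoking the standard fact that restricting a real stable polynomial to a real value of one of its variables yields a polynomial that is either identically zero or real stable; this follows from Hurwitz's theorem, since the restriction at $z_i=0$ is the pointwise limit, as $\epsilon\to 0^+$, of the nonvanishing functions obtained by substituting $z_i=i\epsilon$, on any product of open upper half-planes in the remaining variables. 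After $m$ specializations we are left with a univariate real stable polynomial in $x$, which is exactly a real-rooted polynomial.

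The only mildly subtle point is the specialization at real values of the $z_i$, which the preliminaries do not record as a named lemma but is a routine consequence of Hurwitz's theorem; every other step is a direct invocation of real-stability closure properties already stated in the excerpt, so I do not anticipate any genuine obstacle.
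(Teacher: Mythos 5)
Your proof is correct and follows essentially the same route as the paper: establish real stability of the seed polynomial $g_\mu(x\bone+z)\cdot\det(xI+\sum_i z_i v_iv_i^\intercal)$ via \autoref{thm:motherrealstability}, \autoref{lem:stabilityaddition}, and closure under products; apply the $(1-\partial^2_{z_i})$ operators via \autoref{cor:partial2realstability}; and specialize $z_1=\cdots=z_m=0$ (the paper cites a lemma of Wagner for this last step, which is exactly the Hurwitz argument you spell out). The only minor addition you make beyond the paper is the explicit remark about the $x^{d_\mu-d}$ factor being harmless, which the paper leaves implicit.
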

\begin{proof}
First, by \autoref{thm:motherrealstability},
$$ \det\left(xI+\sum_{i=1}^m z_iv_iv_i^\intercal\right)$$
is real stable. Since $\mu$ is strongly Rayleigh, $g_\mu(z)$ is real stable. So, by \autoref{lem:stabilityaddition}, $g_\mu(x\bone+z)$ is real stable. The product of two real stable polynomials is also real stable, so
$$ g_\mu(x\bone +z)\cdot\det\left(xI+\sum_{i=1}^m z_iv_iv_i^\intercal\right)$$
is real stable.
\autoref{cor:partial2realstability} implies that
$$ \prod_{i=1}^m \left(1-\partial^2_{z_i}\right) \left(g_\mu(x\bone +z)\cdot\det\left(xI+\sum_{i=1}^mz_iv_iv_i^\intercal\right)\right)$$
is real stable as well.
Wagner \cite[Lemma 2.4(d)]{Wag11}
tells us that real stability is preserved under setting variables to real numbers, so
$$ \prod_{i=1}^m \left(1-\partial^2_{z_i}\right) \left(g_\mu(x\bone +z)\cdot\det\left(xI+\sum_{i=1}^mz_iv_iv_i^\intercal\right)\right)\Bigg|_{z_1=\ldots=z_m=0}$$
is a univariate real-rooted polynomial.
The mixed characteristic polynomial is equal to the above polynomial up to a term $x^{d_\mu-d}$.
So, the mixed characteristic polynomial is also real  rooted.
\end{proof}

Now, we use the real-rootedness of the mixed characteristic polynomial to show that the characteristic polynomials of the set of vectors assigned to any set $S$ with nonzero probability in $\mu$ form an interlacing family. 
For a homogeneous strongly Rayleigh measure $\mu$, let  
$$\cF=\{S: \mu(S) >0\},$$
and for $s_1,\ldots,s_k\in\{0,1\}$ let $\cF_{s_1,\ldots,s_k}$ be as defined in \autoref{def:interlacingfamily}.
For any $S\in\cF$, let
$$q_S(x) = \mu(S)\cdot \X\left[\sum_{i\in S} 2v_iv_i^\intercal \right](x^2).$$

\begin{theorem}
\label{thm:mixedinterlacing}
The polynomials $\{q_S\}_{S\in\cF}$ form an interlacing family.
\end{theorem}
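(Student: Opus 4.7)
My plan is to combine \autoref{lem:realrootedimplyinterlacing}, which reduces the existence of a common interlacing to the real-rootedness of all convex combinations, with \autoref{cor:mixedrealstability} applied to a carefully chosen modification of $\mu$. I will fix a prefix $s_1,\ldots,s_k\in\{0,1\}$ with $k<m$ for which both $\cF_{s_1,\ldots,s_k,0}$ and $\cF_{s_1,\ldots,s_k,1}$ are nonempty, set $A=\{i\leq k:s_i=1\}$ and $B=\{i\leq k:s_i=0\}$, and note that each $q_S$ has degree $2d$ with positive leading coefficient $\mu(S)>0$. It will therefore suffice to show that for every $\lambda\in[0,1]$, the polynomial
$$p_\lambda(x) \;:=\; \lambda f_{s_1,\ldots,s_k,0}(x) + (1-\lambda) f_{s_1,\ldots,s_k,1}(x)$$
is real rooted.

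The key step will be to construct a homogeneous strongly Rayleigh distribution $\mu^*$ on $[m]$ whose mixed characteristic polynomial equals $p_\lambda$ up to a positive scalar. I will build the generating polynomial of $\mu^*$ from $g_\mu$ through a sequence of operations that each preserve real stability and homogeneity of degree $d_\mu$: (i) apply the operator $\prod_{i\in A}\partial_{z_i}$ (partial differentiation preserves real stability by Wagner~\cite{Wag11}); (ii) substitute $z_i=0$ for $i\in B$ (specialization at real values preserves real stability, as already used in \autoref{cor:mixedrealstability}); (iii) multiply by the monomial $\prod_{i\in A} z_i$ (each $z_i$ is trivially real stable, and the product of real stable polynomials is real stable); (iv) scale the remaining variable $z_{k+1}$ by a positive real $\alpha$ (\autoref{lem:stabilitymultiplication}). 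Since $g_\mu$ is multi-affine and homogeneous, a direct expansion gives
$$\sum_{S\in\cF_{s_1,\ldots,s_k,0}} \mu(S)\, z^S \;+\; \alpha\sum_{S\in\cF_{s_1,\ldots,s_k,1}} \mu(S)\, z^S,$$
which is real stable and homogeneous of degree $d_\mu$. Taking $\alpha=(1-\lambda)/\lambda$ and rescaling by $\lambda$ produces the generating polynomial (up to a positive scalar) of a homogeneous strongly Rayleigh distribution $\mu^*$ with $\mu^*(S)\propto\lambda\mu(S)$ on $\cF_{s_1,\ldots,s_k,0}$ and $\mu^*(S)\propto(1-\lambda)\mu(S)$ on $\cF_{s_1,\ldots,s_k,1}$.

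Applying \autoref{cor:mixedrealstability} to $\mu^*$ and the vectors $v_1,\ldots,v_m$ will give real-rootedness of its mixed characteristic polynomial, which by construction is a positive multiple of $p_\lambda$, so $p_\lambda$ is real rooted as desired. The boundary cases $\lambda\in\{0,1\}$ are handled analogously, either by additionally differentiating in $z_{k+1}$ (producing $\mu^*$ supported on $\cF_{s_1,\ldots,s_k,1}$) or by setting $z_{k+1}=0$ (producing $\mu^*$ supported on $\cF_{s_1,\ldots,s_k,0}$). The main difficulty I anticipate is purely bookkeeping: verifying that each operation genuinely preserves both real stability and homogeneity, and tracking that the surviving coefficients are exactly $\lambda\mu(S)$ on $\cF_{s_1,\ldots,s_k,0}$ and $(1-\lambda)\mu(S)$ on $\cF_{s_1,\ldots,s_k,1}$; once this is done, the argument reduces to a mechanical chain of the stability lemmas from \autoref{sec:realstability}.
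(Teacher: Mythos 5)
Your proof is correct and is essentially the same argument as the paper's: you reduce to real-rootedness of the convex combination via \autoref{lem:realrootedimplyinterlacing}, realize $\lambda q_{s_1,\ldots,s_k,0}+(1-\lambda)q_{s_1,\ldots,s_k,1}$ as the mixed characteristic polynomial of a new homogeneous strongly Rayleigh distribution obtained from $g_\mu$ by conditioning on the prefix and scaling $z_{k+1}$, and invoke \autoref{cor:mixedrealstability}. The only difference from the paper is organizational: the paper builds the conditioned generating polynomial inductively via \eqref{eq:musk1} and \eqref{eq:musk0} and then applies \autoref{lem:stabilitymultiplication} with an explicit $\lambda_{k+1}$, whereas you carry out the same differentiations, zero-substitutions, monomial multiplication, and positive scaling in a single pass while tracking the coefficients explicitly; both routes rest on the same closure properties of real stability.
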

\begin{proof}
For $1\leq k\leq m$ and $s_1,\ldots,s_k\in\{0,1\}$,
let $\mu_{s_1,\ldots,s_k}$ be $\mu$ conditioned on
the sets $S\in\cF_{s_1,\ldots,s_k}$, i.e., $\mu$ conditioned on $i\in S$ for all $i\leq k$ where $s_i=1$ and $i\notin S$ for all $i\leq k$ where $s_i=0$. 
We inductively write the generating polynomial of $\mu_{s_1,\ldots,s_k}$ in terms of $g_\mu$.
Say we have written $g_{\mu_{s_1,\ldots,s_k}}$ in terms of $g_\mu$. Then, we can write,
\begin{eqnarray}
\label{eq:musk1}
g_{\mu_{s_1,\ldots,s_k,1}}(z) &=&
\frac{z_{k+1}\cdot\partial_{z_{k+1}} g_{\mu_{s_1,\ldots,s_k}}(z)}{\partial_{z_{k+1}}
g_{\mu_{s_1,\ldots,s_k}}(z)\big|_{z_i=1}},\\
g_{\mu_{s_1,\ldots,s_k,0}}(z) &=& \frac{g_{\mu_{s_1,\ldots,s_k}}(z) \big|_{z_{k+1}=0}}{g_{\mu_{s_1,\ldots,s_k}}(z)\big|_{z_{k+1}=0, z_i=1 \text{ for } i\neq k+1}}.
\label{eq:musk0}
\end{eqnarray}
Note that the denominators of both equations are just normalizing constants. 
The above polynomials are well defined if the normalizing constants are nonzero, i.e., if the set $\cF_{s_1,\ldots,s_k,s_{k+1}}$ is nonempty.
Since the real stable polynomials are closed under differentiation and substitution, for any $1\leq k\leq m$, and $s_1,\ldots,s_k\in\{0,1\}$, if $g_{\mu_{s_1,\ldots,s_k}}$ is well defined,
it is real stable, so $\mu_{s_1,\ldots,s_k}$ is a strongly Rayleigh distribution.

Now, for $s_1,\ldots,s_k\in\{0,1\}$, let
$$ q_{s_1,\ldots,s_k}(x) = \sum_{S\in\cF_{s_1,\ldots,s_k}} q_S(x). $$
Since $\mu_{s_1,\ldots,s_k}$ is strongly Rayleigh,  by \autoref{cor:mixedrealstability},
$q_{s_1,\ldots,s_k}(x)$ is real rooted.

By \autoref{lem:realrootedimplyinterlacing}, to prove the theorem it is enough to show that if $\cF_{s_1,\ldots,s_k,0}$ and $\cF_{s_1,\ldots,s_k,1}$ are nonempty, then for any $0 < \lambda < 1$,
$$ \lambda\cdot q_{s_1,\ldots,s_k,1}(x) + (1-\lambda) \cdot q_{s_1,\ldots,s_k,0}(x)$$
is real rooted.
Equivalently, by \autoref{cor:mixedrealstability},
it is enough to show that for any $0<\lambda<1$,
\begin{equation} \lambda\cdot g_{\mu_{s_1,\ldots,s_k,1}}(z) + (1-\lambda) \cdot g_{\mu_{s_1,\ldots,s_k,0}}(z) 
\label{eq:lambdagrealstable}
\end{equation}
is real stable.
Let us write,
\begin{eqnarray*}
g_{\mu_{s_1,\ldots,s_k}}(z) &=&  z_{k+1}\cdot\partial_{z_{k+1}} g_{\mu_{s_1,\ldots,s_k}}(z) + g_{\mu_{s_1,\ldots,s_k}}(z)\big|_{z_{k+1}=0} \\
&=& \alpha\cdot g_{\mu_{s_1,\ldots,s_k,1}}(z) + \beta\cdot g_{\mu_{s_1,\ldots,s_k,0}}(z),
\end{eqnarray*}
for some $\alpha,\beta>0$. The second identity follows by \eqref{eq:musk1} and \eqref{eq:musk0}.
Let $\lambda_{k+1} > 0$ such that
\begin{equation} \frac{\lambda_{k+1}\cdot \alpha}{\lambda} = \frac{\beta}{1-\lambda}.
\label{eq:lambdakp1def}
\end{equation}
Since $g_{\mu_{s_1,\ldots,s_k}}$ is real stable,
by \autoref{lem:stabilitymultiplication}
$$ g_{\mu_{s_1,\ldots,s_k}}(z_1,\ldots,z_k,\lambda_{k+1}\cdot z_{k+1},z_{k+2},\ldots,z_m)$$
is real stable.
But, by \eqref{eq:lambdakp1def} the above polynomial is just a multiple of  \eqref{eq:lambdagrealstable}. So, \eqref{eq:lambdagrealstable} is real stable.
\end{proof}

\section{An Extension of \cite{MSS13} Multivariate Barrier Argument}
\label{sec:barrier}
In this section we upper-bound the roots of the mixed characteristic polynomial in terms of the marginal probabilities of elements of $[m]$ in $\mu$ and the maximum of the squared norm of vectors $v_1,\ldots,v_m$.
\begin{theorem}
\label{thm:barriermaxroot}
Given vectors $v_1,\ldots,v_m\in\R^d$, and a homogeneous strongly Rayleigh probability distribution $\mu:[m]\to\R_+$, such that the marginal probability of each element $i\in [m]$ is at most $\eps_1$, $\sum_{i=1}^m v_iv_i^\intercal=I$ and $\norm{v_i}^2 \leq \eps_2$,
the largest root of $\mu[v_1,\ldots,v_m](x)$
is at most $4(2\eps+\eps^2)$, where $\eps=\eps_1+\eps_2$,
\end{theorem}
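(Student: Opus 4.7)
My plan is to extend the multivariate upper-barrier argument of \cite{MSS13} so that it accommodates the operators $1-\partial_{z_i}^2$. The central idea, foreshadowed in the proof overview, is to use the factorization $1-\partial_{z_i}^2 = (1-\partial_{z_i})(1+\partial_{z_i})$: an MSS13-style barrier analysis shows that $(1-\partial_{z_i})$ pushes the upper barrier in the $z_i$ direction forward by approximately $1/(1-\phi_i)$, where $\phi_i$ is the $i$-th barrier value at the current point, while a dual analysis shows that $(1+\partial_{z_i})$ pulls the barrier backward by approximately $1/(1+\phi_i)$ (the latter shift is located by solving $\Phi_i(Q) = -1$, which shows that the largest root of $Q+\partial_{z_i}Q$ lies strictly below that of $Q$). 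The difference $1/(1-\phi_i)-1/(1+\phi_i) = \Theta(\phi_i)$ shows that the net shift per operator is second order in the barrier, which is precisely what allows us to reach a final bound of order $\sqrt{\eps}$ rather than the order-$1$ bound produced by $(1-\partial_{z_i})$ alone.

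Concretely, I would work with the real stable polynomial
\[
	P(x,z) = g_\mu(x\bone+z)\cdot \det\Bigl(xI + \sum_{i=1}^m z_iv_iv_i^\intercal\Bigr)
\]
from the proof of \autoref{cor:mixedrealstability}, together with the upper-barrier functions $\Phi_i(Q;(X,z)) = \partial_{z_i}Q(X,z)/Q(X,z)$. At a candidate starting point $(X,-t\bone)$ with $X>t>0$, Euler's identity applied to the homogeneous $g_\mu$ (which yields the marginal contribution $p_i/(X-t)\leq \eps_1/(X-t)$), together with the isotropy $\sum_jv_jv_j^\intercal = I$ (which collapses $XI-t\sum_jv_jv_j^\intercal$ to $(X-t)I$ so that the determinant contribution is $\|v_i\|^2/(X-t)\leq \eps_2/(X-t)$), yields after a short calculation
\[
	\Phi_i(P;(X,-t\bone)) \leq \frac{\eps_1+\eps_2}{X-t} = \frac{\eps}{X-t}.
\]
Taking $X-t$ of order $\sqrt{\eps}$ makes every $\phi_i$ of order $\sqrt{\eps}$ and in particular strictly less than $1$, which is the hypothesis needed to launch the barrier argument.

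The technical core is then a barrier-update lemma: if $\phi_i:=\Phi_i(Q;(X,z))<1$, there exists a forward shift $\Delta_i = \Theta(\phi_i)$ such that $(X,z+\Delta_i e_i)$ lies above all roots of $(1-\partial_{z_i}^2)Q$ and, moreover, $\Phi_j((1-\partial_{z_i}^2)Q;(X,z+\Delta_i e_i))\leq \Phi_j(Q;(X,z+\Delta_i e_i))$ for every $j$. Iterating this update for $i=1,\dots,m$ starting from $(X,-t\bone)$, each coordinate is shifted forward by at most $\Delta_i \leq t$, so the end point $(X, w)$ has $w\leq 0$ componentwise; the monotonicity of ``above all roots'' then implies $(X,0)$ is also above all roots of $\prod_i(1-\partial_{z_i}^2)P$. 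Restricting to $z=0$ gives the mixed characteristic polynomial, whose largest root is thus at most $X$, yielding the stated bound after optimizing $X$ and $t$. I expect the main obstacle to be the barrier-update lemma itself: translating the heuristic cancellation $1/(1-\phi)-1/(1+\phi)=\Theta(\phi)$ into a precise statement requires simultaneously tracking the first-order barriers $\Phi_j$ and the second-order quantities $\partial_{z_i}^2Q/Q$, verifying the positivity condition $(1-\partial_{z_i}^2)Q>0$ at the shifted point, and ensuring that after $m$ iterations these quantities remain controlled (using the global identity $\sum_i\Phi_i(P;(X,-t\bone))=(d_\mu+d)/(X-t)$, which is $m$-independent thanks to isotropy and homogeneity).
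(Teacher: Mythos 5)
Your proposal takes essentially the same approach as the paper: the paper also starts at $t\bone$ (equivalently your $(X,-t\bone)$ with $X-t=t$), computes the initial barrier $\Phi^i_p(t\bone)\leq \eps/t$ via homogeneity of $g_\mu$ and isotropy, chooses $t=\delta=\sqrt{2\eps+\eps^2}$, and iteratively applies a barrier-shift lemma whose hypothesis $\frac2\delta\Phi^j_p(z)+\Phi^j_p(z)^2\leq 1$ encodes precisely your $\Theta(\phi_i)$ net-shift heuristic from the $(1-\partial)(1+\partial)$ factorization. The technical ``barrier-update lemma'' you flag as the main obstacle is the paper's \autoref{lem:barriershift}, which is proved not by analyzing $(1-\partial_{z_i})$ and $(1+\partial_{z_i})$ separately but by a direct computation bounding $\partial_{z_i}\Psi^j_p/\partial_{z_i}\Phi^j_p\leq 2\Phi^j_p$ (\autoref{lem:barrierhigherderivative}) via the determinantal representation of bivariate real stable polynomials.
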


First, similar to \cite{MSS13} we derive a slightly different expression.
\begin{lemma}
For any probability distribution $\mu$ and vectors $v_1,\ldots,v_m\in\R^d$ such that $\sum_{i=1}^m v_iv_i^\intercal=I$, 
$$ x^{d_\mu-d}\mu[v_1,\ldots,v_m](x)=\prod_{i=1}^m \left(1-\partial^2_{y_i}\right)\left(g_\mu(y)\cdot\det\left(\sum_{i=1}^m y_iv_iv_i^\intercal\right)\right)\Bigg|_{y_1=\ldots=y_m=x}. $$
\end{lemma}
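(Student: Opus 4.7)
The plan is to deduce this identity directly from the preceding theorem by a simple change of variables that exploits the isotropy hypothesis. The key observation is that the hypothesis $\sum_{i=1}^m v_iv_i^\intercal = I$ allows us to absorb the $xI$ term inside the determinant, namely
\[
 xI + \sum_{i=1}^m z_i v_iv_i^\intercal \;=\; \sum_{i=1}^m (x+z_i)\, v_iv_i^\intercal.
\]
Thus if we define the auxiliary polynomial $h(y_1,\ldots,y_m) := g_\mu(y) \cdot \det\!\big(\sum_{i=1}^m y_i v_iv_i^\intercal\big)$, then the inner expression on the right-hand side of the previous theorem satisfies
\[
 g_\mu(x\bone+z) \cdot \det\!\Big(xI + \sum_{i=1}^m z_i v_iv_i^\intercal\Big) \;=\; h(x+z_1,\ldots,x+z_m).
\]

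Next I would apply the chain rule. Since $y_i = x + z_i$ with $x$ held constant, the operators $\partial_{z_i}$ and $\partial_{y_i}$ act identically on $h(x+z_1,\ldots,x+z_m)$, and hence so do $1-\partial^2_{z_i}$ and $1-\partial^2_{y_i}$. Moreover, evaluating at $z_1=\cdots=z_m=0$ is exactly the same as evaluating at $y_1=\cdots=y_m=x$. Therefore
\[
 \prod_{i=1}^m (1-\partial^2_{z_i})\big[h(x+z_1,\ldots,x+z_m)\big]\Big|_{z=0}
 \;=\; \prod_{i=1}^m (1-\partial^2_{y_i})\,h(y)\Big|_{y_1=\cdots=y_m=x}.
\]

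Finally, by the preceding theorem the left-hand side above equals $x^{d_\mu-d}\,\mu[v_1,\ldots,v_m](x)$, which yields the desired identity. There is essentially no obstacle here beyond bookkeeping: the content is entirely carried by the isotropy identity $\sum v_iv_i^\intercal = I$, which is what permits the translation $z_i \mapsto y_i - x$ to fold $x$ into the same linear combination that appears inside the determinant. The only care needed is to verify that the constant shift in each variable does not alter the action of the differential operator, which is immediate from the chain rule.
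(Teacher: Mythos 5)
Your proof is correct and is essentially the same argument the paper gives: the paper's one-line proof simply cites the chain rule fact $\partial_{y_i} f(y_i)\big|_{y_i=z_i+x} = \partial_{z_i} f(z_i+x)$, while you additionally spell out the preliminary step of using isotropy to absorb $xI$ into the determinant so that both factors become functions of $y_i = x + z_i$. Nothing is missing; you've just made explicit the bookkeeping the paper leaves implicit.
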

\begin{proof}
This is because for any differentiable function
$f$, 
$\partial_{y_i} f(y_i)|_{y_i=z_i+x} = \partial_{z_i} f(z_i+x).$
\end{proof}

Let
$$ Q(y_1,\ldots,y_m)=\prod_{i=1}^m \left(1-\partial^2_{y_i}\right)\left(g_\mu(y) \cdot\det\left(\sum_{i=1}^my_iv_iv_i^\intercal\right)\right).$$
Then, by the above lemma, the maximum root of $Q(x,\ldots,x)$ is the same as the maximum root of $\mu[v_1,\ldots,v_m](x)$. In the rest of this section we upper-bound the maximum root of $Q(x,\ldots,x)$.

It directly follows from the proof of  Theorem 5.1 in \cite{MSS13} that the maximum root of $Q(x,\ldots,x)$  is at most $(1+\sqrt{\eps})^2$. 
But, in our setting, any upper-bound that is more than 1 obviously holds, as for any $S\subseteq[m]$,
$$ \norm{\sum_{i=1}^m v_iv_i^\intercal} \leq 1.$$
The main difficulty that we are facing is to prove an upper-bound  of $O(\eps)$ on the maximum root of $Q(x,\ldots,x)$. 

We use an extension of the multivariate barrier argument of \cite{MSS13} to upper-bound the maximum root of $Q$. We manage to prove a significantly smaller upper-bound because we apply $1-\partial^2_{y_i}$ operators as opposed to the $1-\partial_{y_i}$ operators used in \cite{MSS13}. This allows us to impose significantly smaller shifts on the barrier upper-bound in our inductive argument.

\begin{definition}
For a multivariate polynomial $p(z_1,\ldots,z_m)$,
we say $z\in\R^m$ is above all roots of $p$
if for all $t\in\R^m_+$,
$$ p(z+t)> 0. $$
We use $\Ab_p$ to denote the set of points which
are above all roots of $p$.
\end{definition}

We use the same barrier function defined in \cite{MSS13}.
\begin{definition}
For a real stable polynomial $p$, and $z\in\Ab_p$, the barrier function of $p$ in direction $i$ at $z$ is
$$ \Phi^i_p(z) := \frac{\partial_{z_i}p(z)}{p(z)}= \partial_{z_i}\log p(z).$$
To analyze the rate of change of the barrier function with respect to the $1-\partial^2_{z_i}$ operator, we need to work with the second derivative of $p$ as well. We define,
$$ \Psi^i_p(z) := \frac{\partial^2_{z_i}p(z)}{p(z)}.$$
Equivalently, for a univariate restriction $q_{z,i}(t) = p(z_1,\ldots,z_{i-1},t,z_{i+1},\ldots,z_m)$,
with real roots $\lambda_1,\ldots,\lambda_r$ we can write,
\begin{align*}
\Phi^i_p(z) &= \frac{q_{z,i}'(z_i)}{q_{z,i}(z_i)}=
\sum_{j=1}^r \frac{1}{z_i-\lambda_j},\\
\Psi^i_p(z)&= \frac{q_{z,i}''(z_i)}{q_{z,i}(z_i)} = \sum_{1\leq j<k\leq r} \frac{2}{(z_i-\lambda_j)(z_i-\lambda_k)}. 
\end{align*}
\end{definition}

The following lemma is immediate from the above definition.
\begin{lemma}
\label{lem:psiphi}
If $p$ is real stable and $z\in\Ab_p$, then for all $i\leq m$,
$$ \Psi^i_p(z) \leq  \Phi^i_p(z)^2.$$
\end{lemma}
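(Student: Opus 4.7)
The plan is very short: work with the univariate restriction $q_{z,i}(t)$ and read off the inequality algebraically.

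First I would invoke real stability to justify that $q_{z,i}$ is a univariate real-rooted polynomial, so the formulas
\[
\Phi^i_p(z) = \sum_{j=1}^r \frac{1}{z_i-\lambda_j}, \qquad \Psi^i_p(z) = \sum_{1\leq j<k\leq r} \frac{2}{(z_i-\lambda_j)(z_i-\lambda_k)}
\]
stated in the definition are well defined with real $\lambda_j$'s. Next I would use $z \in \Ab_p$ to argue that $z_i > \lambda_j$ for every $j$: the map $t \mapsto q_{z,i}(z_i + t)$ is strictly positive for $t > 0$, so no root of $q_{z,i}$ exceeds $z_i$. In particular, every term $1/(z_i-\lambda_j)$ is strictly positive.

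The main step is then just the multinomial expansion of the square:
\[
\Phi^i_p(z)^2 = \sum_{j=1}^r \frac{1}{(z_i-\lambda_j)^2} + 2\sum_{1\leq j<k\leq r}\frac{1}{(z_i-\lambda_j)(z_i-\lambda_k)} = \sum_{j=1}^r \frac{1}{(z_i-\lambda_j)^2} + \Psi^i_p(z).
\]
Since $\sum_{j=1}^r (z_i-\lambda_j)^{-2} \geq 0$, this gives $\Psi^i_p(z) \leq \Phi^i_p(z)^2$, as required.

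There is really no obstacle here: the only subtle point is confirming that $z \in \Ab_p$ forces $z_i$ to lie to the right of every real root of $q_{z,i}$, which is immediate from the definition of $\Ab_p$ applied to the coordinate direction $e_i$. The rest is a one-line arithmetic identity, and the positivity of the omitted diagonal sum of squares supplies the inequality.
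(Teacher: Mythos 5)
Your proof is correct and follows the same route as the paper: both reduce to the univariate restriction $q_{z,i}$, use $z\in\Ab_p$ to place $z_i$ to the right of every root, and then observe that $\Phi^i_p(z)^2-\Psi^i_p(z)=\sum_{j}(z_i-\lambda_j)^{-2}\geq 0$. The only difference is that you spell out the (correct but routine) justifications—real-rootedness of the restriction via real stability, and why $\Ab_p$ forces $z_i>\lambda_j$—that the paper leaves implicit.
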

\begin{proof}
Since $z\in\Ab_p$, $z_i > \lambda_j$ for all $1\leq j\leq r$, so,
$$ \Phi^i_p(z)^2 - \Psi^i_p(z) = \left(\sum_{j=1}^r \frac{1}{z_i-\lambda_j}\right)^2 - \sum_{1\leq j<k\leq r}\frac{2}{(z_i-\lambda_j)(z_i-\lambda_k)} = \sum_{j=1}^r \frac{1}{(z_i-\lambda_j)^2} >0.$$
\end{proof}

The following monotonicity and convexity properties of the barrier functions are proved in \cite{MSS13}.
\begin{lemma}
\label{lem:barrierprops}
Suppose $p(.)$ is a real stable polynomial and $z\in\Ab_p$. Then, for all $i,j\leq m$ and $\delta\geq 0$,
\begin{align}
\label{eq:monotonicity}
&\Phi^i_p(z+\delta \bone_j) \leq \Phi^i_p(z) \text{ and},~~~~~~~~~~~~~~~~~~~~~\text{(monotonicity)}\\
&\Phi^i_p(z+\delta\bone_j) \leq \Phi^i_p(z) + \delta\cdot\partial_{z_j}\Phi^i_p(z+\delta\bone_j)~~~~~~~\text{(convexity)}.
\label{eq:convexity}
\end{align}
\end{lemma}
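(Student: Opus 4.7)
Both inequalities together are equivalent to showing that the one-variable function $\phi(\delta) := \Phi^i_p(z + \delta \bone_j)$ is nonincreasing and convex on $[0,\infty)$: the monotonicity claim is that $\phi$ is nonincreasing, while convexity of $\phi$ yields the tangent-line estimate $\phi(0) \geq \phi(\delta) + \phi'(\delta)(0-\delta)$, which rearranges to \eqref{eq:convexity}. Since $z + \delta \bone_j \in \Ab_p$ for every $\delta \geq 0$, the function $\phi$ is smooth on $[0,\infty)$, and I plan to split into the two cases $i = j$ and $i \neq j$.

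\textbf{The diagonal case $i = j$.} Restricting $p$ to the line $\{z + (t - z_i)\bone_i : t \in \R\}$ produces a univariate real-rooted polynomial $q$ whose roots $\lambda_1, \ldots, \lambda_r$ all lie strictly below $z_i$ (because $z \in \Ab_p$). A partial-fraction expansion gives
\begin{equation*}
\phi(\delta) \;=\; \frac{q'(z_i+\delta)}{q(z_i+\delta)} \;=\; \sum_{k=1}^r \frac{1}{z_i + \delta - \lambda_k},
\end{equation*}
a sum of positive terms, each strictly decreasing (first derivative $-(z_i+\delta-\lambda_k)^{-2}$) and strictly convex (second derivative $2(z_i+\delta-\lambda_k)^{-3}$) in $\delta \geq 0$. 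The sum inherits both properties.

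\textbf{The off-diagonal case $i \neq j$.} Freezing every variable $z_k$ with $k \notin \{i,j\}$ at its value in $z$ preserves real stability (specialization of variables to real values preserves it), so the resulting bivariate polynomial $r(u,t)$ is real stable with $(z_i, z_j) \in \Ab_r$; it suffices to prove both claims for $r$. Invoking the Helton--Vinnikov determinantal representation for bivariate real stable polynomials, I write $r(u,t) = c \cdot \det(uA + tB + C)$ with $c > 0$, PSD matrices $A, B$, and a symmetric matrix $C$; then $M(\delta) := z_i A + (z_j+\delta) B + C \succ 0$ on $\Ab_r$. Combining Jacobi's formula (\autoref{thm:jacobi}) with \autoref{lem:inverse} gives
\begin{align*}
\phi(\delta) &= \trace\bigl(M(\delta)^{-1} A\bigr), \\
\phi'(\delta) &= -\trace\bigl(M(\delta)^{-1} B M(\delta)^{-1} A\bigr), \\
\phi''(\delta) &= 2\,\trace\bigl(M(\delta)^{-1} B M(\delta)^{-1} B M(\delta)^{-1} A\bigr).
\end{align*}
Setting $X := M(\delta)^{-1/2} B M(\delta)^{-1/2}$ and $Y := M(\delta)^{-1/2} A M(\delta)^{-1/2}$, both PSD, cyclicity of trace rewrites $-\phi'(\delta) = \trace(XY) \geq 0$ (trace of a product of PSDs) and $\phi''(\delta)/2 = \trace(X^2Y) = \trace(XYX) \geq 0$ (since $XYX$ is PSD, being a conjugation of the PSD matrix $Y$ by the symmetric matrix $X$). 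This gives both desired inequalities.

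\textbf{The main obstacle.} The substance lies in the off-diagonal case, and specifically in the reliance on the nontrivial Helton--Vinnikov theorem to produce a determinantal representation of a bivariate real stable polynomial. A more elementary alternative is to approximate $r(u,t)$ by products of real-stable linear factors $\alpha u + \beta t + \gamma$ with $\alpha, \beta \geq 0$ (the sign constraints being forced by bivariate real stability together with positive leading coefficient), verify monotonicity and convexity for each factor by direct computation, and pass to the limit via Hurwitz's theorem. Either path localizes the real difficulty to the bivariate case; the rest of the argument is the purely univariate diagonal case together with routine linear-algebra manipulations.
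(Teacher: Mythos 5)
Your main argument is correct and is, in substance, the proof in \cite{MSS13} that this paper cites for the lemma; the reduction to monotonicity and convexity of the one-variable function $\phi(\delta)=\Phi^i_p(z+\delta\bone_j)$, the univariate partial-fraction computation for $i=j$, and the bivariate Helton--Vinnikov (Lewis--Parrilo--Ramana) determinantal representation for $i\neq j$ with the trace computations
$-\phi'=\trace(XY)\geq 0$ and $\tfrac12\phi''=\trace(XYX)\geq 0$
are exactly the standard route.

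The only thing to flag is the ``more elementary alternative'' you sketch at the end: it does not actually work. A bivariate real stable polynomial of degree $d$ is generally \emph{not} a product of affine real stable factors, nor a limit of such products. The determinantal form $\det(uA+tB+C)$ factors into linear pieces precisely when $A$ and $B$ commute, and a dimension count makes density impossible: products of $d$ affine factors form a family of dimension $O(d)$, while bivariate real stable polynomials of degree $d$ form a family of dimension $\Theta(d^2)$. So Hurwitz cannot rescue the factored approximation, and the Helton--Vinnikov step is genuinely unavoidable here. Since you present this only as an aside, it does not affect the correctness of your proof, but you should not advertise it as a viable simplification.
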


Recall that the purpose of the barrier functions $\Phi^i_p$ is to allow us to reason about the relationship between $\Ab_p$ and $\Ab_{p-\partial^2_{z_i}p}$; the monotonicity property and 
\autoref{lem:psiphi} imply the following lemma.
\begin{lemma}
\label{lem:rootsandbarrier}
If $p$ is real stable and $z\in\Ab_p$ is such that $\Phi^i_p(z)<1$, then $z\in\Ab_{p-\partial^2_{z_i} p}$.
\end{lemma}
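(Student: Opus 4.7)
The plan is to translate the desired containment $z \in \Ab_{p - \partial^2_{z_i} p}$ into a positivity statement about $1 - \Psi^i_p$, and then control $\Psi^i_p$ via the hypothesis on $\Phi^i_p$. Concretely, for any $t \in \R^m_+$ we have $p(z+t) > 0$ since $z \in \Ab_p$, so we may factor
$$(p - \partial^2_{z_i} p)(z+t) = p(z+t)\bigl(1 - \Psi^i_p(z+t)\bigr).$$
The sign of this expression therefore reduces to showing $\Psi^i_p(z+t) < 1$ for every $t \in \R^m_+$.

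To get that bound on $\Psi^i_p$, I would combine the two tools available. First, the monotonicity property (\autoref{lem:barrierprops}) gives $\Phi^i_p(z + \delta \bone_j) \leq \Phi^i_p(z)$ coordinatewise; iterating once for each coordinate of $t = \sum_j t_j \bone_j$ yields $\Phi^i_p(z+t) \leq \Phi^i_p(z) < 1$. Second, \autoref{lem:psiphi} gives $\Psi^i_p(z+t) \leq \Phi^i_p(z+t)^2 < 1$. Plugging this into the factorization above, $(p - \partial^2_{z_i} p)(z+t) > 0$ for every $t \in \R^m_+$, which is exactly the condition $z \in \Ab_{p - \partial^2_{z_i} p}$.

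There is no real obstacle here; the only bookkeeping point is to confirm that $z+t$ itself lies in $\Ab_p$ so that $\Phi^i_p$ and $\Psi^i_p$ are well defined at $z+t$ and $p(z+t)$ is strictly positive rather than merely nonzero. This follows immediately from the definition: for any further shift $s \in \R^m_+$ we have $(z+t)+s = z + (t+s)$ with $t+s \in \R^m_+$, so $p((z+t)+s) > 0$. Once this point is dispensed with, the proof is essentially a one-line consequence of the monotonicity of $\Phi^i_p$ together with the universal inequality $\Psi^i_p \leq (\Phi^i_p)^2$.
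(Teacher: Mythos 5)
Your proof is correct and follows essentially the same route as the paper's: apply the coordinatewise monotonicity of $\Phi^i_p$ to propagate $\Phi^i_p(z+t) < 1$, invoke \autoref{lem:psiphi} to get $\Psi^i_p(z+t) < 1$, and conclude positivity of $(1-\partial^2_{z_i})p(z+t)$. The extra bookkeeping you include (that $z+t \in \Ab_p$ so the barrier functions are defined there) is left implicit in the paper but is the same observation.
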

\begin{proof}
Fix a nonnegative vector $t$. Since $\Phi$ is nonincreasing in each coordinate, 
$$\Phi^i_p(z+t)\leq \Phi^i_p(z) <1.$$
Since $z+t\in\Ab_p$, by \autoref{lem:psiphi},
$$ \Psi^i_p(z+t)\leq \Phi^i_p(z+t)^2 < 1. $$
Therefore, 
$$\partial^2_{z_i} p(z+t) < p(z+t) \Rightarrow (1-\partial^2_{z_i})p(z+t) >0, $$
as desired.
\end{proof}

We use an inductive argument similar to \cite{MSS13}.
We argue that when we apply each operator $(1-\partial^2_{z_j})$, the barrier functions, $\Phi^i_p(z)$, do not increase by shifting the upper bound along the direction $\bone_j$. 
As we would like to prove a significantly smaller upper bound on the maximum root of the mixed characteristic polynomial, we may only  shift along direction $\bone_j$ by a small amount. 
In the following lemma we show that
 when we apply the $(1-\partial^2_{z_j})$ operator we only need to shift the upper bound proportional to $\Phi^j_p(z)$ along the direction $\bone_j$.
\begin{lemma}
\label{lem:barriershift}
Suppose that $p(z_1,\ldots,z_m)$ is real stable and $z\in\Ab_p$. If for $\delta>0$,
$$ \frac2{\delta}\Phi^j_p(z)+\Phi^j_p(z)^2 \leq 1,$$
then, for all $i$,
$$ \Phi^i_{p-\partial^2_{z_j}p}(z+\delta\cdot\bone_j) \leq \Phi^i_p(z).$$
\end{lemma}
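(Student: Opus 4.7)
The plan is to analyze directly how the operator $1-\partial^2_{z_j}$ acts on the barrier functions. Factoring $1-\partial^2_{z_j}=(1-\partial_{z_j})(1+\partial_{z_j})$ and applying two \cite{MSS13}-style shifts forces $\delta\geq 1/(1-\Phi^j_p(z))$ just from the $(1-\partial_{z_j})$ factor, which is useless in the regime where $\delta$ is of order $\Phi^j_p(z)$ that we need. The essential saving comes from \autoref{lem:psiphi}: the operator $1-\partial^2_{z_j}$ is governed by $\Psi^j_p$, not $\Phi^j_p$, and the quadratic bound $\Psi^j_p\leq(\Phi^j_p)^2$ is exactly what the assumption $\frac{2}{\delta}\Phi^j_p(z)+\Phi^j_p(z)^2\leq 1$ exploits.

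Set $q:=(1-\partial^2_{z_j})p$ and $y:=z+\delta\bone_j$. The core computation is the closed-form identity
$$\Phi^i_q(y) \;=\; \Phi^i_p(y) - \frac{\partial^2_{z_j}\Phi^i_p(y)+2\Phi^j_p(y)\,\partial_{z_j}\Phi^i_p(y)}{1-\Psi^j_p(y)},$$
which I would derive by expanding $\partial^2_{z_j}(\partial_{z_i}p)/p$ via the identity $\partial_{z_j}(\partial_{z_i}p)/p = \partial_{z_j}\Phi^i_p+\Phi^i_p\Phi^j_p$ and simplifying using the key relation $\partial_{z_j}\Phi^j_p+(\Phi^j_p)^2=\Psi^j_p$. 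For this formula to be meaningful, $y$ must lie in $\Ab_q$. Since $q = p(1-\Psi^j_p)$ and $p>0$ on $\Ab_p$, it suffices to verify $\Psi^j_p<1$ on $y+\R^m_+$. Applying the monotonicity half of \autoref{lem:barrierprops} to the real stable polynomial $\partial^2_{z_j}p$ gives $\Phi^k_{\partial^2_{z_j}p}\leq \Phi^k_p$ throughout $\Ab_p$, so $\Psi^j_p = (\partial^2_{z_j}p)/p$ is itself coordinate-nonincreasing there; combined with \autoref{lem:psiphi} and the hypothesis this yields $\Psi^j_p(y+t)\leq\Phi^j_p(z)^2<1$.

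For the main inequality, write $f(s):=\Phi^i_p(z+s\bone_j)$, $\phi:=\Phi^j_p(y)$, $\psi:=\Psi^j_p(y)$, so the desired bound $\Phi^i_q(y)\leq\Phi^i_p(z)$ becomes $f(\delta)-\frac{f''(\delta)+2\phi f'(\delta)}{1-\psi}\leq f(0)$. Specializing \autoref{lem:barrierprops} to the $z_j$-axis gives $f'\leq 0$, $f''\geq 0$, and the tangent-line bound $f(0)-f(\delta)\geq\delta|f'(\delta)|$. Since $f''(\delta)\geq 0$ and $-f'(\delta)=|f'(\delta)|$, the subtracted fraction satisfies
$$-\frac{f''(\delta)+2\phi f'(\delta)}{1-\psi} \;\leq\; \frac{2\phi|f'(\delta)|}{1-\psi},$$
and combining this with the convexity bound reduces the target to $\frac{2\phi}{\delta}+\psi\leq 1$. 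By monotonicity $\phi\leq\Phi^j_p(z)$, and by the coordinate monotonicity of $\Psi^j_p$ established above together with \autoref{lem:psiphi}, $\psi\leq\Phi^j_p(z)^2$, so the hypothesis closes the argument. The main obstacle I anticipate is the algebraic simplification yielding the closed-form expression for $\Phi^i_q$, whose cancellations depend delicately on $\partial_{z_j}\Phi^j_p+(\Phi^j_p)^2=\Psi^j_p$; a secondary subtle point is that the bound goes through only after discarding the $-f''(\delta)\leq 0$ term in the numerator, which is exactly what lets us avoid any reference to the third derivative of $\log p$ and land on the clean condition in the statement.
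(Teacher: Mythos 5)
Your proof is correct, and it takes a genuinely different route from the paper's. The paper computes $\Phi^i_{p-\partial^2_j p} = \Phi^i_p - \partial_i\Psi^j_p/(1-\Psi^j_p)$, then proves a separate technical lemma (\autoref{lem:barrierhigherderivative}) bounding $\partial_i\Psi^j_p/\partial_i\Phi^j_p \le 2\Phi^j_p$; the proof of that lemma for $i\neq j$ passes through the Helton--Vinnikov/Lax determinantal representation and a chain of trace manipulations. Your version bypasses all of this by expanding $\partial_i\Psi^j_p$ via the log-derivative identity $\partial_i\Psi^j_p = \partial^2_{z_j}\Phi^i_p + 2\Phi^j_p\,\partial_{z_j}\Phi^i_p$ (which follows from $\Psi^j_p = \partial_{z_j}\Phi^j_p + (\Phi^j_p)^2$ and commuting partials). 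Since $\partial^2_{z_j}\Phi^i_p \ge 0$ and the denominator $1-\Psi^j_p>0$, the $f''(\delta)$-term in the numerator has a favorable sign and can simply be discarded; the remaining term $2\Phi^j_p\,\partial_{z_j}\Phi^i_p$ cancels against the tangent-line bound exactly as in the paper, landing on $\tfrac{2}{\delta}\Phi^j_p(z)+\Phi^j_p(z)^2\le 1$. In effect you have re-derived \autoref{lem:barrierhigherderivative} from barrier-function convexity alone, with no determinantal computation. (Note, though, that the cross-direction convexity $\partial^2_{z_j}\Phi^i_p\ge 0$ for $i\neq j$ is proved in \cite{MSS13} by the same Lax-representation argument, so the dependence on that machinery is relocated rather than truly eliminated. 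Also, the paper's \autoref{lem:barrierprops} as stated gives only the tangent-line consequence of convexity; you use $f''\ge 0$ itself, which is the actual MSS13 statement but a slightly stronger citation.)

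One small misstep: your justification that $y\in\Ab_q$ invokes "the monotonicity half of \autoref{lem:barrierprops} applied to $\partial^2_{z_j}p$" to conclude $\Phi^k_{\partial^2_{z_j}p}\le\Phi^k_p$, but that lemma only says each $\Phi^k_{\partial^2_{z_j}p}$ is coordinate-nonincreasing; it does not compare the barrier of the derivative to the barrier of $p$. The conclusion you need is nonetheless immediate without that detour: for any $t\in\R^m_+$, \autoref{lem:psiphi} at $y+t$ gives $\Psi^j_p(y+t)\le\Phi^j_p(y+t)^2$, and monotonicity of $\Phi^j_p$ gives $\Phi^j_p(y+t)\le\Phi^j_p(z)$, while the hypothesis forces $\Phi^j_p(z)<1$; hence $\Psi^j_p(y+t)<1$ and $q(y+t)=p(y+t)(1-\Psi^j_p(y+t))>0$. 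With that patch, the argument is complete.
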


\noindent To prove the above lemma we first need to prove a technical lemma to upper-bound 
$\frac{\partial_{z_i}\Psi^j_p(z)}{\partial_{z_i}\Phi^j_p(z)}$.
We use the following characterization of the bivariate real stable polynomials proved by Lewis, Parrilo, and Ramana \cite{LPR05}.
The following form is stated in \cite[Cor 6.7]{BB10}.

\begin{lemma}
\label{lem:lax}
If $p(z_1,z_2)$ is a bivariate real stable polynomial of degree $d$, then there exist $d\times d$ positive semidefinite matrices $A,B$ and a Hermitian matrix $C$ such that
$$ p(z_1,z_2) = \pm \det(z_1A+z_2B+C).$$
\end{lemma}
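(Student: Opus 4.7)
The plan is to reduce the lemma to the Helton-Vinnikov theorem (the content of Lewis, Parrilo, and Ramana's proof of Lax's conjecture), which produces symmetric definite determinantal representations for ternary hyperbolic polynomials. I first homogenize $p$ by introducing a new variable $z_0$ and setting $P(z_0, z_1, z_2) := z_0^d\, p(z_1/z_0, z_2/z_0)$, a real polynomial of total degree $d$ in three variables with $P(1, z_1, z_2) = p(z_1, z_2)$. The goal is then to establish that $P$ is hyperbolic with respect to every direction $e = (0, a, b)$ with $a, b > 0$, after which Helton-Vinnikov will deliver a determinantal representation with the desired structure.

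For fixed $(x_0, x_1, x_2) \in \R^3$ and $a, b > 0$, I need to verify that $t \mapsto P(x_0, x_1 + at, x_2 + bt)$ is real-rooted. When $x_0 \neq 0$, this polynomial equals (up to a real factor of $x_0^d$) the expression $p\bigl((x_1 + at)/x_0,\, (x_2 + bt)/x_0\bigr)$, and a nonreal root $t_0$ would force both arguments of $p$ to have imaginary parts of the same nonzero sign, contradicting real stability of $p$ together with its conjugate version (which rules out common-sign zeros in the lower half plane, since $p$ has real coefficients). The boundary case $x_0 = 0$ reduces to the same statement for the top-degree homogeneous part $p_d$ of $p$; this $p_d$ inherits real stability via a standard Hurwitz limit argument applied to $s^{-d} p(s z_1, s z_2)$ as $s \to \infty$. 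Since $p_d$ is real stable, homogeneous, and real-valued on $\R^2$, it is of constant sign on the open positive orthant (a zero there would produce, by Hurwitz, a nearby zero in the upper bidisc), so after replacing $p$ by $-p$ if necessary (absorbed into the $\pm$ in the lemma's statement) one obtains $P(e) > 0$ for every $e = (0, a, b)$ with $a, b > 0$.

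Applying Helton-Vinnikov yields real symmetric $d \times d$ matrices $A, B, C$ with $P(z_0, z_1, z_2) = \det(z_0 C + z_1 A + z_2 B)$, together with the guarantee that $e_0 C + e_1 A + e_2 B$ is positive definite for every $e$ in the open hyperbolicity cone of $P$. Since that cone contains $(0, a, b)$ for every $a, b > 0$, we have $aA + bB \succ 0$ for all such $a, b$; letting $b \to 0^+$ with $a = 1$ forces $A \succeq 0$, and symmetrically $B \succeq 0$. Setting $z_0 = 1$ then recovers $p(z_1, z_2) = \det(z_1 A + z_2 B + C)$ with $A, B$ positive semidefinite and $C$ real symmetric (hence Hermitian), as required. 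The main obstacle is invoking Helton-Vinnikov in the form that gives definiteness across the full hyperbolicity cone rather than at a single direction, since it is precisely that two-dimensional family of hyperbolicity directions that forces $A$ and $B$ separately to be PSD instead of only a single combination $aA + bB$; the other delicate step is the $x_0 = 0$ boundary case, handled by the Hurwitz-based inheritance of real stability by the top-degree part $p_d$.
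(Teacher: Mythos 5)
The paper does not prove this lemma; it is quoted verbatim from the literature, attributing it to Lewis, Parrilo, and Ramana \cite{LPR05} with the stated form taken from \cite[Cor.\ 6.7]{BB10}, so there is no internal proof to compare against. Your argument is the standard one behind those references: homogenize to $P(z_0,z_1,z_2)=z_0^d\,p(z_1/z_0,z_2/z_0)$, check that $P$ is hyperbolic with respect to every direction $(0,a,b)$ with $a,b>0$, invoke Helton--Vinnikov to obtain a real symmetric determinantal representation $P=\det(z_0C+z_1A+z_2B)$ that is positive definite on the whole hyperbolicity cone, and then recover $A\succeq 0$, $B\succeq 0$ by letting one of $a,b\to 0^+$; setting $z_0=1$ gives the lemma. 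This is correct in outline, with one step stated more loosely than it should be: the claim that $p_d$ has constant sign on the open positive orthant is not really a Hurwitz statement. The clean argument exploits homogeneity directly: if $p_d(a_0,b_0)=0$ with $a_0,b_0>0$, then for the complex scalar $s=1+i\eta$ with small $\eta>0$ one has $p_d(sa_0,sb_0)=s^{\deg p_d}\,p_d(a_0,b_0)=0$ while both coordinates $sa_0$ and $sb_0$ lie in the open upper half-plane, contradicting real stability of $p_d$. With that small repair (and the tacit assumption that $p$ has degree exactly $d$, so that $p_d\not\equiv 0$ and $P(0,a,b)\neq 0$), your proof is complete and coincides with the route the cited sources take.
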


\begin{lemma}
\label{lem:barrierhigherderivative}
Suppose that $p$ is real stable and $z\in\Ab_p$,
then for all $i,j\leq m$,
$$ \frac{\partial_{z_i}\Psi^j_p(z)}{\partial_{z_i}\Phi^j_p(z)} \leq 2\Phi^j_p(z).$$
\end{lemma}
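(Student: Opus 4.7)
The plan is to use the Lewis--Parrilo--Ramana representation (\autoref{lem:lax}) to reduce the inequality to a single matrix-trace inequality; the remaining work is matrix calculus. For the degenerate case $i=j$ I would handle things directly: restricting $p$ to its $z_i$-variable gives a univariate real-rooted polynomial with roots $\mu_1,\dots,\mu_r<z_i$, and writing the power sums $S_k:=\sum_\ell (z_i-\mu_\ell)^{-k}$ one computes $\Phi^i_p=S_1$, $\Psi^i_p=S_1^2-S_2$, $\partial_{z_i}\Phi^i_p=-S_2$, and $\partial_{z_i}\Psi^i_p=-2S_1S_2+2S_3$, so $\partial_{z_i}\Psi^i_p-2\Phi^i_p\,\partial_{z_i}\Phi^i_p=2S_3>0$, and the stated inequality follows after dividing by the negative quantity $\partial_{z_i}\Phi^i_p$.

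For $i\neq j$, the plan is to restrict $p$ in its two variables $z_i,z_j$ (freezing the rest at $z$) to obtain a bivariate real stable polynomial $q(s,t)$, and apply \autoref{lem:lax} to write $q(s,t)=\det(sA+tB+C)$ for positive semidefinite $A,B$ and Hermitian $C$ (choosing the $\pm$ sign to match the leading coefficient). Setting $M:=z_iA+z_jB+C$, Jacobi's formula (\autoref{thm:jacobi}) combined with $\partial_s M^{-1}=-M^{-1}AM^{-1}$ from \autoref{lem:inverse} produces closed forms
\begin{align*}
\Phi^j_p(z)&=\trace(M^{-1}B),\\
\Psi^j_p(z)&=\trace(M^{-1}B)^2-\trace\bigl((M^{-1}B)^2\bigr),\\
\partial_{z_i}\Phi^j_p(z)&=-\trace(M^{-1}AM^{-1}B),\\
\partial_{z_i}\Psi^j_p(z)&=2\Phi^j_p(z)\,\partial_{z_i}\Phi^j_p(z)+2\trace(M^{-1}AM^{-1}BM^{-1}B),
\end{align*}
the last line coming from differentiating $\trace((M^{-1}B)^2)$ and using cyclic invariance. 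Since $\partial_{z_i}\Phi^j_p(z)\le 0$ by monotonicity (\autoref{lem:barrierprops}), the stated inequality is equivalent to the single trace inequality $\trace(M^{-1}AM^{-1}BM^{-1}B)\ge 0$.

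The main obstacle is to show that $M$ is positive definite; once this is in hand, setting $\hat A:=M^{-1/2}AM^{-1/2}$ and $\hat B:=M^{-1/2}BM^{-1/2}$ (both PSD) one has $\trace(M^{-1}AM^{-1}BM^{-1}B)=\trace(\hat A\hat B^2)=\norm{\hat A^{1/2}\hat B}_F^2\ge 0$, finishing the proof. To establish $M\succ 0$, I plan to use a perturbation: replace $A,B$ by $A_\epsilon:=A+\epsilon I$ and $B_\epsilon:=B+\epsilon I$, noting that the perturbed determinantal polynomial stays real stable by \autoref{thm:motherrealstability} and, by continuity of roots, $(z_i,z_j)$ remains above all its roots for $\epsilon$ sufficiently small. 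Since $A_\epsilon\succ 0$, the positivity of $\det(M_\epsilon+rA_\epsilon)$ for every $r\ge 0$ forces every eigenvalue of $A_\epsilon^{-1/2}M_\epsilon A_\epsilon^{-1/2}$ to be strictly positive, i.e., $M_\epsilon\succ 0$; letting $\epsilon\to 0$ yields $M\succeq 0$, and $\det(M)=q(z_i,z_j)>0$ upgrades this to $M\succ 0$. The matrix calculus is routine; the perturbation argument securing $M\succ 0$ is the one delicate step.
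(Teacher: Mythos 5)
Your computation tracks the paper's proof closely: the $i=j$ case via the univariate restriction, and the $i\neq j$ case via the Lewis--Parrilo--Ramana representation, Jacobi's formula, and reduction to a single trace inequality. Your reduction is an algebraic rearrangement of the paper's (the paper shows both $\trace(L)=-\trace(M^{-1}B_iM^{-1}B_j)\le 0$ and $\trace(LM^{-1}B_j)\le 0$, you show the single quantity $\trace(M^{-1}AM^{-1}BM^{-1}B)\ge 0$ via $\trace(\hat A\hat B^2)$), and both need $M\succ 0$ and $\partial_{z_i}\Phi^j_p\neq 0$ in the same way.

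The genuine gap is in your perturbation argument for $M\succ 0$. First, a minor issue: \autoref{thm:motherrealstability} as stated has no affine Hermitian term $C$, so it does not literally apply to $\det(sA_\epsilon+tB_\epsilon+C)$; that extension is standard (split $C=C_+-C_-$ and use Wagner's specialization lemma) but must be said. The more serious issue is the ``continuity of roots'' step. Replacing $A,B$ by $A_\epsilon=A+\epsilon I$, $B_\epsilon=B+\epsilon I$ changes the polynomial $q_\epsilon$ so that its degree in $s$ jumps to the full $d$ even if $\rank A<d$. The new $d-\rank A$ roots escape to infinity as $\epsilon\to 0$, and nothing in a generic continuity argument forces them to escape to $-\infty$ rather than $+\infty$; if any escape to $+\infty$, then for small $\epsilon>0$ the point $(z_i,z_j)$ is \emph{not} above all roots of $q_\epsilon$ and the rest of your chain ($M_\epsilon\succ 0$, hence $M\succeq 0$) collapses. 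This is exactly what the structural input from the original Kadison--Singer paper is for: \cite[Lem.~5.7]{MSS13} shows (i) the sign in \autoref{lem:lax} is $+$ and (ii) $B_i+B_j\succ 0$. Given (ii), $z\in\Ab_p$ yields $\det(M+r(B_i+B_j))>0$ for all $r\ge 0$, which together with $B_i+B_j\succ 0$ forces every eigenvalue of $(B_i+B_j)^{-1/2}M(B_i+B_j)^{-1/2}$ to be positive, i.e.\ $M\succ 0$, with no perturbation or passage to a limit needed. I recommend replacing your perturbation paragraph with this argument (or with a citation to \cite[Lem.~5.7]{MSS13}, as the paper does); the remainder of your matrix calculus is correct.
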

\begin{proof}
If $i=j$, then we consider the univariate restriction $q_{z,i}(z_i)  = \prod_{k=1}^r (z_i-\lambda_k)$.
Then, 
\begin{eqnarray*}
\frac{\partial_{z_i}\sum_{1\leq k<\l\leq r} \frac{2}{(z_i-\lambda_k)(z_i-\lambda_\l)}}{\partial_{z_i}\sum_{k=1}^r \frac{1}{(z_i-\lambda_k)}} = \frac{\sum_{k\neq \l} \frac{-2}{(z_i-\lambda_k)^2(z_i-\lambda_\l)}}{\sum_{k=1}^r \frac{-1}{(z_i-\lambda_k)^2}} \leq \sum_{\l=1}^r \frac{2}{(z_i-\lambda_\l)} = 2\Phi^j_p(z). 
\end{eqnarray*}
The inequality uses the assumption that $z\in\Ab_p$.

If $i\neq j$, we fix all variables other than $z_i,z_j$ and we consider the bivariate restriction
$$ q_{z,ij}(z_i,z_j)=p(z_1,\ldots,z_m).$$
By \autoref{lem:lax}, 
there are Hermitian positive semidefinite matrices $B_i,B_j$, and a Hermitian matrix $C$ such that
$$q_{z,ij}(z_i,z_j) = \pm \det(z_iB_i+z_jB_j+C). $$
Let $M=z_iB_i +z_jB_j+C$. Marcus, Spielman, and Srivastava \cite[Lem 5.7]{MSS13} observed that
the sign is always positive, that $B_i+B_j$ is 
positive definite. In addition,  $M$ is  positive definite since $B_i+B_j$ is positive definite and $z\in\Ab_p$.

By \autoref{thm:jacobi}, the barrier function in direction $j$ can be expressed as
\begin{equation}
\label{eq:Philax}
\Phi^j_p(z) =  \frac{\partial_{z_j}\det(M)}{\det(M)} = \frac{\det(M)\trace(M^{-1}B_j)}{\det(M)}=\trace(M^{-1}B_j).
\end{equation}
By another application of \autoref{thm:jacobi},
\begin{eqnarray*} 
\Psi^j_p(z)=\frac{\partial^2_{z_j}\det(M)}{\det(M)} &=& \frac{\partial_{z_j}(\det(M)\trace(M^{-1}B_j))}{\det(M)} \\
&=& \frac{\det(M)\trace(M^{-1}B_j)^2}{\det(M)} + \frac{\det(M)\trace((\partial_{z_j}M^{-1})B_j)}{\det(M)}\\
&=& \trace(M^{-1}B_j)^2+\trace(-M^{-1}B_jM^{-1}B_j)\\
&=&\trace(M^{-1}B_j)^2 -\trace((M^{-1}B_j)^2).
\end{eqnarray*}
The second to last identity uses \autoref{lem:inverse}.
Next, we calculate $\partial_{z_i}\Phi^j_p$ and $\partial_{z_i}\Psi^j_p$. First,
by another application of \autoref{lem:inverse},
$$ \partial_{z_i} M^{-1}B_j = -M^{-1} B_i M^{-1} B_j=:L.$$
Therefore,
$$ \partial_{z_i} \Phi^j_p(z) = \partial_{z_i}\trace(M^{-1}B_j) = \trace(L),$$
and
\begin{eqnarray*} 
\partial_{z_i} \Psi^j_p(z) &=& \partial_{z_i}\trace(M^{-1}B_j)^2 - \partial_{z_i}\trace((M^{-1}B_j)^2) \\
&=& 2\trace(M^{-1}B_j)\trace(L) - \trace\left(L(M^{-1}B_j) + (M^{-1}B_j)L \right)\\
&=& 2\trace(M^{-1}B_j)\trace(L) - 2\trace(LM^{-1}B_j).
\end{eqnarray*}
Putting above equations together we get
\begin{eqnarray*}\frac{\partial_{z_i} \Psi^j_p(z)}{\partial_{z_i} \Phi^j_p(z)} &=& 2\frac{\trace(M^{-1}B_j)\trace(L) - \trace(LM^{-1}B_j)}{\trace(L)} \\
&=& 2\trace(M^{-1}B_j) - 2\frac{\trace(LM^{-1}B_j)}{\trace(L)}\\
&=& 2\Phi^j_p(z) - 2\frac{\trace(LM^{-1}B_j)}{\trace(L)} 
\end{eqnarray*}
where we used \eqref{eq:Philax}.

To prove the lemma it is enough to show that $\frac{\trace(LM^{-1}B_j)}{\trace(L)} \geq 0$. We show that both the numerator and the denominator are nonpositive.  First, 
$$ \trace(L)=-\trace(M^{-1}B_iM^{-1}B_j)\leq 0$$
where we used that $M^{-1}B_iM^{-1}$ and $B_j$ are positive semidefinite and the fact that the trace of the product of positive semidefinite matrices is nonnegative.
Secondly,
$$ \trace(LM^{-1}B_j) = \trace(-M^{-1}B_iM^{-1}B_jM^{-1}B_j) = -\trace(B_iM^{-1}B_jM^{-1}B_jM^{-1}) \leq 0,$$
where we again used that $M^{-1}B_jM^{-1}B_jM^{-1}$ and $B_i$ are positive semidefinite and the trace of the product of two positive semidefinite matrices is nonnegative.
\end{proof}

\begin{proofof}{\autoref{lem:barriershift}}
We write $\partial_i$ instead of $\partial_{z_i}$ for the ease of notation.
First, we write $\Phi^i_{p-\partial^2_j p}$
in terms of $\Phi^i_p$ and $\Psi^j_p$ and $\partial_i\Psi^j_p$.
\begin{eqnarray*}
\Phi^i_{p-\partial^2_j p} &=& \frac{\partial_i(p-\partial^2_j p)}{p-\partial^2_j p}\\
&=& \frac{\partial_i((1-\Psi^j_p)p)}{(1-\Psi^j_p)p}\\
&=& \frac{(1-\Psi^j_p)(\partial_i p)}{(1-\Psi^j_p)p} + \frac{(\partial_i(1-\Psi^j_p))p}{(1-\Psi^j_p)p}\\
&=& \Phi^i_p - \frac{\partial_i\Psi^j_p}{1-\Psi^j_p}.
\end{eqnarray*}
We would like to show that $\Phi^i_{p-\partial^2_j p}(z+\delta\bone_j)\leq \Phi^i_p(z)$. Equivalently, it is enough to show that
$$ -\frac{\partial_i \Psi^j_p(z+\delta\bone_j)}{1-\Psi^j_p(z+\delta\bone_j)} \leq \Phi^i_p(z) - \Phi^i_p(z+\delta\bone_j). $$
By \eqref{eq:convexity} of \autoref{lem:barrierprops},
it is enough to show that
$$ -\frac{\partial_i\Psi^j_p(z+\delta\bone_j)}{1-\Psi^j_p(z+\delta\bone_j)} \leq \delta\cdot(-\partial_j\Phi^i_p(z+\delta\bone_j)).$$
By \eqref{eq:monotonicity} of \autoref{lem:barrierprops}, $\delta\cdot(-\partial_j\Phi^i_p(z+\delta\bone_j)) >0$ so we may divide both sides of the above inequality by this term and obtain
$$ \frac{-\partial_i\Psi^j_p(z+\delta\bone_j)}{-\delta\cdot\partial_i\Phi^j_p(z+\delta\bone_j)} \cdot \frac{1}{1-\Psi^j_p(z+\delta\bone_j)} \leq 1,$$
where we also used $\partial_j\Phi^i_p=\partial_i\Phi^j_p$.
By \autoref{lem:barrierhigherderivative},
$ \frac{\partial_i \Psi^j_p}{\partial_i \Phi^j_p} \leq 2\Phi^j_p$. So, we can write,
$$ \frac2{\delta} \Phi^j_p(z+\delta\bone_j)\cdot \frac{1}{1-\Psi^j_p(z+\delta\bone_j)}\leq 1.$$
By \autoref{lem:psiphi} and \eqref{eq:monotonicity}
of \autoref{lem:barrierprops},
\begin{align*}
\Phi^j_p(z+\delta\bone_j) &\leq \Phi^j_p(z),\\
 \Psi^j_p(z+\delta\bone_j)&\leq \Phi^j_p(z+\delta\bone_j)^2\leq \Phi^j_p(z)^2.
\end{align*}
So, it is enough to show that
$$ \frac2{\delta}\Phi^j_p(z)\cdot\frac{1}{1-\Phi^j_p(z)^2} \leq 1$$
Using $\Phi^j_p(z) <1$ we may multiply both sides with $1-\Phi^j_p(z)$ and we obtain,
$$ \frac2{\delta}\Phi^j_p(z) + \Phi^j_p(z)^2 \leq 1,$$
as desired.
\end{proofof}

Now, we are read to prove \autoref{thm:barriermaxroot}.
\begin{proofof}{\autoref{thm:barriermaxroot}}
Let 
$$p(y_1,\ldots,y_m)=g_\mu(y) \cdot \det\left(\sum_{i=1}^m y_iv_iv_i^\intercal\right).$$
Set
$\eps=\eps_1+\eps_2$ and 
$$ \delta=t=\sqrt{2\eps+\eps^2}.$$

For any $z\in \R^m$ with positive coordinates, $g_\mu(z)>0$, and additionally

$$\det\left(\sum_{i=1}^m z_iv_iv_i^\intercal\right)>0.$$
Therefore, for every $t>0$, $t\bone\in\Ab_{p}$.

Now, by   \autoref{thm:jacobi},
\begin{eqnarray*}\Phi^i_p(y) &=& \frac{(\partial_i g_{\mu}(y))\cdot\det(\sum_{i=1}^m y_iv_iv_i^\intercal)}{g_\mu(y)\cdot\det(\sum_{i=1}^m y_i v_iv_i^\intercal)} 
+ \frac{g_\mu(y)\cdot(\partial_i\det(\sum_{i=1}^m y_iv_iv_i^\intercal))}
{g_\mu(y)\cdot\det(\sum_{i=1}^m y_iv_iv_i^\intercal)}\\
&=& \frac{\partial_i g_\mu(y)}{g_\mu(y)} + \trace\left(\left(\sum_{i=1}^m y_iv_iv_i^\intercal\right)^{-1} y_iy_i^\intercal\right)
\end{eqnarray*}
Therefore, since $g_\mu$ is homogeneous,
\begin{eqnarray*} 
\Phi^i_p(t\bone) &=& \frac{1}{t} \cdot \frac{\partial_i g_\mu(\bone)}{g_\mu(\bone)} +
\frac{\norm{v_i}^2}{t} \\
&=& \frac{\PP{S\sim\mu}{i\in S}}{t} + \frac{\norm{v_i}^2}{t} \leq \frac{\eps_1}{t} + \frac{\eps_2}{t} = \frac{\eps}{t}.
\end{eqnarray*}
The second identity uses \eqref{eq:marginaldef}.
Let $\phi=\eps/t$. Using $t=\delta$, it follows that
$$ \frac2{\delta}\phi + \phi^2 = \frac{2\eps}{t^2} + \frac{\eps^2}{t^2} = 1.$$
For $k\in[m]$ define
$$ p_k(y_1,\ldots,y_m) = \prod_{i=1}^k \left(1-\partial^2_{y_i}\right) \left(g_\mu(y)\cdot\det\left(\sum_{i=1}^m y_iv_iv_i^\intercal\right)\right),$$
and note that $p_m=Q$.
Let $x^0$ be the all-$t$ vector and $x^k$ be the vector that is $t+\delta$ in the first $k$ coordinates and $t$ in the rest.
By inductively applying \autoref{lem:rootsandbarrier}
and \autoref{lem:barriershift} for any $k\in[m]$,
$x^k$ is above all roots of $p_k$ and for all $i$,
$$\Phi^i_{p_k}(x_k) \leq \phi \Rightarrow \frac2{\delta}\Phi^i_{p_k}(x_i) + \Phi^i_{p_k}(x_i)^2 \leq 1.$$
Therefore, the largest root of $\mu[v_1,\ldots,v_m](x)$ is at most
$$ t+\delta = 2\sqrt{2\eps+\eps^2}.$$
\end{proofof}

\begin{proofof}{\autoref{thm:main}}
Let $\eps=\eps_1+\eps_2$ as always. 
\autoref{thm:barriermaxroot} implies that
the largest root of the mixed characteristic polynomial, $\mu[v_1,\ldots,v_m](x)$, is at most $2\sqrt{2\eps+\eps^2}.$
\autoref{thm:mixedinterlacing}
tells us that the polynomials $\{q_{S}\}_{S:\mu(S)>0}$
form an interlacing family.
So, by \autoref{thm:interlacingfamily} there is a set $S\subseteq [m]$ with $\mu(S)>0$ such that
the largest root of
$$ \det\left(x^2I-\sum_{i\in S} 2v_iv_i^\intercal\right)$$
is at most $2\sqrt{2\eps+\eps^2}$.
This implies that the largest root of
$$ \det\left(xI-\sum_{i\in S} 2v_iv_i^\intercal\right)$$
is at most $(2\sqrt{2\eps+\eps^2})^2$. Therefore,
$$ \norm{\sum_{i\in S} v_iv_i^\intercal} =\frac12\norm{\sum_{i\in S} 2v_iv_i^\intercal} \leq \frac12 (2\sqrt{2\eps+\eps^2})^2=4\eps+2\eps^2.$$
\end{proofof}

\section{Discussion}
Similar to  \cite{MSS13}
our main theorem is not algorithmic, i.e., we are not aware of any polynomial time algorithm that for a given homogeneous strongly Rayleigh distribution with small marginal probabilities and for a set of vectors assigned to the underlying elements with small norm finds a sample of the distribution with spectral norm bounded away from 1. Such an algorithm can lead to improved approximation algorithms for the Asymmetric Traveling Salesman Problem.

Although our main theorem can be seen as a generalization of \cite{MSS13} the bound that we prove on the maximum root of the mixed characteristic polynomial is incomparable to the bound of \autoref{thm:mss}. In \autoref{cor:KSr} we used our main theorem to prove Weaver's $\text{KS}_r$ conjecture \cite{Wea04} for $r>4$. It is an interesting question to see if the dependency on $\eps$ in our multivariate barrier can be improved, and if one can reprove  $\text{KS}_2$ using our machinery.

\subsubsection*{Acknowledgement}
We would like to thank Adam Marcus, Dan Spielman, and Nikhil Srivastava for stimulating discussions regarding the main obstacles in generalizing their proof of the Kadison-Singer problem. 

\bibliographystyle{alpha}
\bibliography{ref2}

\end{document}